\pgfplotsset{compat=1.12}
\pgfplotsset{plot coordinates/math parser=false}
\newlength\figureheight
\newlength\figurewidth
\theoremstyle{plain}
\newtheorem{theorem}{Theorem}[section]
\theoremstyle{plain}
\newtheorem{proposition}[theorem]{Proposition}
\theoremstyle{plain}
\theoremstyle{definition}
\newtheorem{definition}[theorem]{Definition}
\theoremstyle{plain}
\newtheorem{lemma}[theorem]{Lemma}
\theoremstyle{plain}
\theoremstyle{plain}
\newtheorem{remark}[theorem]{Remark}
\newtheorem{assumption}[theorem]{Assumption}
\theoremstyle{definition}
\newcommand{\RR}{\mathbb{R}}
\newcommand{\mIntInt}[2]{\mathcal I_{[#1, #2]}}
\newcommand{\mIntGeq}[1]{\mathcal I_{\geq #1}}
\newcommand{\mIndi}{\mathbb I}
\newcommand{\mNorm}[2]{\left\lVert {#2} \right\rVert_{#1}}
\newcommand{\mNormGen}[1]{\left\lVert {#1} \right\rVert}
\newcommand{\mNormSmall}[2]{\lVert {#2} \rVert_{#1}}
\newcommand{\mDefFunction}[3]{#1: #2 \rightarrow #3}
\newcommand{\mSafe}{\mathcal{S}}
\newcommand{\mDef}{\coloneqq}
\newcommand{\mSetValuedConfidence}[1]{\mCS_{#1}}
\newcommand{\mConfidenceSet}[2]{\mathcal C_{#2}(#1)}
\newcommand{\mExpectation}[1]{\mathbb E\left[#1\right]}
\newcommand{\mData}{\mathcal D}
\newcommand{\mOnes}[1]{\mathbbm 1_{#1}}
\newcommand{\mHausdorff}[2]{d_{\mathrm H}(#1, #2)}
\newcommand{\mZpred}{\mu}
\newcommand{\mZpredOpt}{{\mu}^*}
\newcommand{\mZpredCand}{\tilde {\mu}}
\newcommand{\mVpred}{{v}}
\newcommand{\mVpredOpt}{{v}^*}
\newcommand{\mVpredCand}{\tilde {v}}
\newcommand{\mUpred}{u}
\newcommand{\mUpredOpt}{u^*}
\newcommand{\mUpredCand}{\tilde {v}}
\newcommand{\mXpred}{x}
\newcommand{\mXpredOpt}{x^*}
\newcommand{\mCS}{\mathcal E}
\newcommand{\XX}{\mathbb X}
\newcommand{\UU}{\mathbb U}
\newcommand{\NN}{\mathcal {N}}
\newcommand{\ZZ}{\mathbb{Z}}
\newcommand{\LL}{\mathcal{L}}
\newcommand{\mCol}[1]{\mathrm {col}_#1}
\newcommand{\mRow}[1]{\mathrm {row}_#1}
\newcommand*{\END}{\hfill\ensuremath{\lhd}}
\DeclareMathOperator*{\arginf}{arginf}
\tikzstyle{block} = [draw, rectangle,
\tikzstyle{sum} = [draw, fill=blue!20, circle, node distance=1cm]
\tikzstyle{input} = [coordinate]
\tikzstyle{output} = [coordinate]
\tikzstyle{pinstyle} = [pin edge={to-,thin,black}]
\newif\ifmygrid@coordinates
\tikzset{/mygrid/step line/.style={line width=0.80pt,draw=gray!80},
	/mygrid/steplet line/.style={line width=0.25pt,draw=gray!80}}
\def\mygrid@def@coordinates(#1,#2)(#3,#4){%
	\def\mygrid@xlo{#1}%
	\def\mygrid@xhi{#3}%
	\def\mygrid@ylo{#2}%
	\def\mygrid@yhi{#4}%
}
\newcommand\DrawGrid[3][]{%
	\pgfkeys{/mygrid/.cd,coordinates=true,step=1,steplet=0.2,#1}%
	\draw[/mygrid/steplet line] #2 grid[step=\mygrid@steplet] #3;
	\draw[/mygrid/step line] #2 grid[step=\mygrid@step] #3;
	\mygrid@def@coordinates#2#3%
	\ifmygrid@coordinates%
		\draw[/mygrid/step line]
		\foreach \xpos in {\mygrid@xlo,...,\mygrid@xhi} {%
				(\xpos,\mygrid@ylo) -- ++(0,-3pt)
				node[anchor=north] {$\xpos$}
			}
		\foreach \ypos in {\mygrid@ylo,...,\mygrid@yhi} {%
				(\mygrid@xlo,\ypos) -- ++(-3pt,0)
				node[anchor=east] {$\ypos$}
			};
	\fi%
}
\begin{document}

\begin{frontmatter}

\title{A predictive safety filter for learning-based control of constrained nonlinear dynamical systems} 

\thanks[footnoteinfo]{This work was supported by the Swiss National Science
Foundation under grant no. PP00P2 157601 / 1. The material in this paper was not
presented at any conference.}

\author{Kim P. Wabersich}\ead{wkim@ethz.ch},
\author{Melanie N. Zeilinger}\ead{mzeilinger@ethz.ch}
\address{Institute for Dynamic Systems and Control, ETH Zurich, Zurich, Switzerland}

\begin{keyword}                           
safe learning-based control, control of constrained systems, robust control of nonlinear systems, data-based control               
\end{keyword}                             

\begin{abstract}                          
	The transfer of reinforcement learning (RL) techniques into
	real-world applications is challenged by safety requirements
	in the presence of physical limitations. Most RL methods, in particular
	the most popular algorithms, do	not support explicit
	consideration of state and input constraints. In this
	paper, we address this problem for nonlinear systems
	with continuous state and input spaces by introducing a 
	predictive safety filter, which is able to turn a constrained
	dynamical system into an unconstrained safe system and
	to which any RL algorithm can be applied `out-of-the-box'.
	The predictive safety filter receives
	the proposed control input and decides, based on the current
	system state, if it can be safely applied to the real system,
	or if it has to be modified otherwise. Safety is thereby
	established by a continuously updated safety policy,
	which is based on a model predictive control formulation using
	a data-driven system model and considering state and input
	dependent uncertainties.
\end{abstract}

\end{frontmatter}

\label{sec:problem_statement}

\section{Introduction}
Reinforcement learning (RL) has demonstrated its success in
solving complex and high-dimensional control tasks, see for example
\cite{Levine2016}. These results motivate a more widespread transfer to real-world
applications to enable automated design of high performance controllers
with little need for expert knowledge. In physical systems, such
as mechanical, thermal, biological, or chemical systems, physical
limitations naturally arise as constraints, such as limited torque
in the case of a robot arm or a limited power supply in building control.
In addition to physical constraints, many relevant applications in industry
require satisfaction of safety specifications, preventing, e.g., an autonomous car or
aircraft from crashing, which can typically be formulated in terms of
constraints on the system state.
The simultaneous satisfaction of safety constraints
under physical limitations during RL constitutes one of the main open problems in
AI safety as discussed e.g. in~\citet[Section 3]{amodei2016concrete}.

Significant progress in the safe operation of constrained systems has been made
through model predictive control techniques, which provide rigorous constraint
satisfaction, see, e.g., \cite{Mayne2014}. While model-based RL techniques such as
\cite{Kamthe2017} are conceptually closely related to model predictive control,
so far relatively few methods have considered safety guarantees. Learning-based
model predictive control aims to combine the benefits of both fields,
see for example~\cite{Hewing2020} for an overview. In addition to the fact
that designing such algorithms with rigorous safety guarantees is rather
challenging, often conservative, and requires a considerable amount of
expert knowledge, the approach is inherently restricted to a model-based
control policy. More precisely, at each time step, a finite-horizon optimal
control problem is solved in a receding horizon fashion in order to approximate
a potentially infinite horizon optimal control policy.

\emph{Concept:}
We propose a model predictive control (MPC) variant
as a predictive safety filter (PSF), that can turn highly nonlinear
and safety-critical dynamical systems into inherently \emph{safe systems},
and to which any RL algorithm without safety certificates can be
applied `out-of-the-box', see also Figure~\ref{fig:concept}.
Compared with a standard use of MPC, the PSF verifies if the input
proposed by the RL algorithm is safe, otherwise it is entitled to
modify the input as little as necessary to maintain safe operation
at all future times. This means that the PSF
only needs to keep the system safe instead of controlling it
well with respect to a certain objective
(e.g. comfort or economic criteria). The problem of finding a safety filter is therefore in
general less complex than finding a desired optimal policy with respect to some
objective and subject to constraints, motivating the combination 
of a predictive safety filter with an RL algorithm to safely optimize performance.

Differently to recently proposed related concepts presented
in~\cite{Gillula2011,Wabersich2018,Ames2019},
we use the notion of a \emph{safe system} in Figure~\ref{fig:concept}, as similarly
introduced in \cite{Wieland2007} within the context of safety barrier functions.
The concept emphasizes the possibility that \emph{any} RL algorithm that
would have been used to control the original system
can be applied to the \emph{safe system} instead, yielding a certified safe RL application.
The predictive safety filter provides \emph{safety} at a desired level of probability,
\emph{modularity} in terms of the employed RL controller, and \emph{minimal intervention}
by filtering RL input signals only if we cannot guarantee safety at
the specified probability level, similar to \cite{Fisac2019}.
\begin{figure}
	\centering
	\begin{tikzpicture}[scale=0.8]
		\input{fig/controlLoopSafe.tex}
	\end{tikzpicture}
	\caption{Concept of predictive safety filter: Based on the current state
		$x(k)$, a learning-based algorithm provides a control input
		$u_\LL(k)=\pi_\LL(k,x(k))\in\RR^m$, which is processed by the safety filter
		$u(k) = \pi_\mSafe(k,x(k),u_\LL(k))$ and applied to the real system.}
	\label{fig:concept}
\end{figure}

\emph{Contributions:}
Based on a probabilistic model of the system dynamics, which is inferred from data,
this paper presents a predictive safety filter that builds on concepts from MPC for
constrained nonlinear systems, and thereby generalizes the safety
certification method for \textit{linear} systems proposed by~\cite{Wabersich2018}.
Safety of an RL input is thereby enforced in real-time
by searching for a safe backup trajectory for the next time step towards a known
set of safe states. If necessary, to ensure safety for all times in the future, the
search process for a backup trajectory is allowed to modify (filter)
the RL input.
As MPC typically outperforms non-optimization based techniques, e.g. based on
control Lyapunov functions or sliding-mode controllers
by solving an approximate optimal control problem on-line, the proposed PSF
provides similar advantages compared with methods utilizing, e.g., control
barrier functions for safety~\citep{Ames2019}.
More precisely, the PSF formulation provides an implicit representation of the
set of safe state and input pairs, approximating the largest set of
admissible states and inputs using finite-horizon MPC techniques.
At the same time, the implicit safe set representation enables
favorable scalability properties compared to, e.g., Hamilton-Jacobi-Bellman
safety frameworks~\citep{Fisac2019} by avoiding offline computations
that scale exponentially with the number of state dimensions.
Clearly, these advantages come at the price of solving an optimization
problem online, for which highly efficient tools are, however,
available~\citep{domahidi2012efficientInteriorPoint}.

The application to nonlinear and probabilistic
system descriptions, obtained, e.g., through machine learning techniques, is
enabled via a nonlinear predictive safety filter formulation
that is robust in probability and supports state and input dependent
uncertainty information. Robustness with respect to non-uniform
model uncertainties is enabled by restricting predicted backup trajectories to
confident subsets of the state and input space. The corresponding
online optimization problem has similar computational complexity to
nominal MPC problems, while being less conservative compared to other robust MPC
approaches~\citep[Section 3.1]{Hewing2020} that are often based on a
uniform uncertainty bound.
The proposed formulation leads to a theoretical analysis that rigorously relates parameters of
the predictive safety filter and accuracy of its system model to safety in probability.
Depending on the desired constraint satisfaction probability,
this enables safe exploration beyond available data.

We illustrate the approach using a simulated pendulum
swing-up task, in which only little initial data around the
stable downward position is available and overshoots of the upward
position are prohibited, imposing challenging safety constraints
on the system. Scalability and practical implementation are demonstrated
in a quadrotor simulation example, considering the task of
learning to quickly approach a landing position for a full-scale model with 12
states and 4 inputs.

\emph{Discussion:}
While the focus of this paper is the certification of RL algorithms, the
concept can also be used together with, e.g., human inputs. For example,
in the case of autonomous driving, the safety filter could be used
to ensure safety of either an RL-based controller or a human driver, and
can be viewed as a driver assistance system that is able to overrule the
student driver (or RL algorithm), if necessary for safety.

\section{Related Work}

\emph{Safe model-free reinforcement learning:}
There is a growing awareness of safety questions in the domain of
artificial intelligence \citep{amodei2016concrete}, and several
safe reinforcement learning techniques have been proposed, see e.g.
\cite{Garcia2015} for an overview. \cite{achiam2017constrained}, e.g., provide
safety in expectation based on a trust-region approach with respect
to the policy gradient.

Most notions of safety considered in this line of research, e.g.
one-step constraint satisfaction in expectation, tend to be less strict
compared with the probabilistic safety requirements at all time steps in the future as
considered in this paper. More importantly, since most techniques are policy-based,
safety is coupled to a specific policy and therefore potentially also to a specific task,
limiting generalization of the safety certificates.

\emph{Learning-based model predictive control:}
Originating from concepts in robust model predictive control (MPC),
extensions of MPC schemes to safe learning-based methods 
have been proposed, see e.g. \cite{Hewing2020} for a review.
In addition, various results have investigated combinations of
MPC with learning-based online model identification techniques
\citep{Ostafew2016,limon2017learning,Koller2018,soloperto2018learning},
also in an adaptive manner \citep{Tanaskovic2013}.
In the context of robotics, similar concepts exist, which are often referred to
as funneling, see e.g. \cite{Majumdar2017} and references therein, as well as so called
LQR-trees \citep{Tedrake2010}.

While some of these techniques have been demonstrated to work well in practice
\citep{bouffard2012learning,Ostafew2016,Hewing2018b},
they typically either lack rigorous theoretical safety guarantees, tend to be overly conservative by
relying on Lipschitz-based estimates in the prediction of the uncertain system evolution, or
are restricted to a very specific system class.

\emph{Model-based policy certification through barrier functions and safety frameworks:}
Conceptually, the idea of safety architectures,
as illustrated in Figure~\ref{fig:concept}, was originally
proposed by~\cite{Seto1998}, where a principled switching
between a safety controller and basic/experimental controllers enables safe
controller tuning online. Based on this concept, control
theoretic frameworks have been developed~\citep{Prajna2004},
related to control Lyapunov functions~\citep{Wieland2007},
and have also become known as control barrier
functions, see~\cite{Ames2019} for an overview.
Recent developments also consider the combination with learning
tasks through data-driven models~\citep{Ohnishi2019}. 
While such frameworks inherit strong theoretical results from
control Lyapunov function theory, they require the explicit availability of a
control barrier function, which is difficult to compute in general.
In particular, to the best of the authors' knowledge, only
approximate approaches exist using ellipsoidal or sum-of-squares
computations~\citep{Wang2018a} to design a safety barrier
function with respect to given, e.g. polytopic, state and input constraints. 

In the case of partially known system dynamics, the concept of
a control barrier function can be combined with Bayesian model
estimates from data that validate the resulting closed-loop
system~\citep{berkenkamp2016safe}.
The techniques share similar limitations with
safe model-free RL methods, namely that they are tailored to a specific task.
A task-independent learning-based safety framework has been introduced in \cite{Gillula2011},
which generalizes the concept of explicitly knowing a barrier function.
It consists of a model-based safe set of system states
and computes a corresponding safe control policy, which is entitled to override a potentially
unsafe RL algorithm to ensure invariance with respect to the safe set of system states
i.e. containment within the safe set at all times.
This concept was further developed in several papers, providing principled methods to
compute the safe set as well as the corresponding safe policy
\citep{Fisac2019,wabersich2017scalableSafety},
which build the foundation of the safety filter presented in this paper.

The aforementioned techniques related to both safety barrier functions and safety frameworks
either suffer from limited scalability to higher dimensional and complex systems, or 
only provide principled design computations for  
specific constrained model classes, such as linear or polynomial models.
While also building on the same high-level concept, this paper addresses these limitations by
1) considering a stochastic nonlinear system model belief, which is well-suited to
learning-based control of highly nonlinear and unstable system dynamics, and 2) a unified
MPC-inspired formulation for the safety policy
(predictive safety filter), which avoids the explicit computation of a
safety barrier function or a safe set.

In particular, compared
to similar MPC-inspired safety mechanisms such as
\citet{Wabersich2018,Mannucci2018,li2019robust},
we consider nonlinear system models with stochastic
parameter uncertainties and provide a predictive
safety filter formulation that is capable of leveraging the resulting state
and input dependent uncertainty estimates
to reduce conservatism.

\section{Problem Statement}
\emph{Notation:}
The set of integers in the interval $[a,b]\subset\RR$ is
$\mIntInt{a}{b}$, and the set of integers in the interval
$[a,\infty)\subset\RR$ is $\mIntGeq{a}$.
The $i$-th row and $i$-th column of
a matrix $A\in\RR^{n\times m}$ is denoted by $\mRow{i}(A)$
and $\mCol{i}(A)$. By $\mOnes{n}$ we denote the vector of ones
with length $n$. \END

Consider deterministic discrete-time systems of the form
\begin{align}
	x(k+1) = f(x(k), u(k); \theta_{\mathcal R}), ~ \forall k\in\mIntGeq{0},
	\label{eq:general_nonlinear_system}
\end{align}
with dynamics $\mDefFunction{f}{\XX \times \UU}{\RR^n}$ parametrized
by $\theta_{\mathcal R}\in \RR^p$ and stochastic initial condition
$x(0) = x_{\mathrm{init}}\in\XX$ with known distribution $p(x_{\mathrm{init}})$.
The system is subject to polyhedral state and input constraints
$\XX \mDef \{x \in \RR^n \vert A_x x \leq \mOnes{n_x} \}$ and
$\UU \mDef \{u \in \RR^m \vert A_u u \leq \mOnes{n_u} \}$, originating
from physical limitations and safety requirements. We consider the case
of unknown `real' parameters $\theta_{\mathcal R}$, but assume the availability
of a distribution
\begin{align}
	\theta \sim p(\theta) \text{ with mean }\mExpectation{\theta},
	\label{eq:dynamics_model}
\end{align}
which can be estimated from data. The overall objective is to \emph{safely} find a policy
$\mDefFunction{\pi_\LL}{\mIntGeq{0}\times\XX}{\UU}$ that either minimizes an episodic,
finite-time or infinite horizon objective
\begin{align}\label{eq:general_objective}
	J_{\bar N}(x(k)) =
		\mExpectation{\sum_{i=k}^{\bar N} \ell\big(x(i), \pi_\LL(i,x(i))\big)}
\end{align}
with $\bar N \in \mathbb N$, and stochastic stage cost
$\ell(x, u) \mDef \bar \ell(x,u) + w_\ell$ consisting of a deterministic
part $\mDefFunction{\bar \ell}{\XX \times \UU}{\RR}$ and zero mean i.i.d.
stochastic noise $w_\ell$. In order to prescribe a desired level of caution
and desired conservatism in exploration, we consider \emph{safety} in terms
of constraint satisfaction at a desired probability level $p_\mSafe>0$ as
\begin{align}\label{eq:chance_constraints}
	\Pr\left(\forall k \in \mIntInt{0}{\bar N}: x(k)\in\XX, u(k)\in\UU\right) & \geq p_\mSafe.
\end{align}
This paper addresses the problem of implementing a safety filter as shown in
Figure~\ref{fig:concept}, which ensures closed-loop safety according
to~\eqref{eq:chance_constraints}.
The filter enables application of any RL algorithm 
to the virtual input of the safe system, i.e. $u_\LL(k) = \pi_\LL(k,x(k))$,
with the goal of minimizing the objective, while
ensuring safety by selecting the input to the real system as
$u(k) = \pi_\mSafe(k,x(k),u_\LL(k))$.
In other words, the approach turns a safety-critical task into
an unconstrained task with respect to the safe system dynamics
$f_\mSafe(k,x(k),u_\LL(k)) \mDef f(x(k),\pi_\mSafe(k,x(k),u_\LL(k)))$
such that any RL algorithm can be safely applied, see Figure~\ref{fig:concept}.
To further specify the desired properties of $\pi_\mSafe$,
consider the following definition of a safety certified learning-based control input.
\begin{definition}\label{def:safety_certified_input}
	An input $u_\LL(\bar k)$ is \emph{certified as safe} for system
	\eqref{eq:general_nonlinear_system} at time step $\bar k$ and
	state $x(\bar k)$ with respect to a \emph{safety filter}
	$\mDefFunction{\pi_\mSafe}{\mIntGeq{0}\times\XX\times\UU}{\UU}$,
	if $\pi_\mSafe(\bar k,x(\bar k),u_\LL(\bar k))=u_\LL(\bar k)$ and
	application of $u(k)=\pi_\mSafe(k,x(k),u_\LL(k))$ for $k\geq\bar k$
	implies safety for all times according to \eqref{eq:chance_constraints}.
\end{definition}
Following this definition, the goal is to provide a safety filter
$\pi_\mSafe$ that restricts learning as little as possible by certifying
a large set of learning inputs $u_\LL(k)$ for a given state $x(k)$.
If the learning input cannot be certified as safe,
the safety filter provides an alternative safe input, i.e. 
$u(k)=\pi_\mSafe(k,x(k),u_\LL(k))\neq u_\LL(k)$, where 
the filter aims at the smallest possible modification by, e.g.,
minimizing $\mNorm{2}{\pi_\mSafe(k,x(k),u_\LL(k)) - u_\LL(k)}$.
The following section introduces the mechanisms of the proposed predictive safety filter,
which builds on a predictive constrained control formulation, planning safe
trajectories based on a probabilistic model belief to ensure safe system operation at all times
according to~\eqref{eq:chance_constraints}.


\begin{figure*}[p]
	\begin{minipage}[b]{.4\textwidth}
		\begin{center}{\large Nominal Predictive Safety Filter}\end{center}
		\vspace{0.5cm}
		\textbf{Nominal online problem:}
		\begin{subequations}\label{eq:MPSF_nominal_opt}
			\begin{align} \label{eq:MPSF_nominal_opt_cost}
				\min_{\substack{\{\mUpred_{i|k}\}}}
				~           & \mNormGen{u_\LL - \mUpred_{0|k}}                                                                                          \\
				\nonumber
				\text{s.t.} & ~ \forall i\in\mIntInt{0}{N-1}:                                                                                           \\
				            & \mXpred_{i+1|k} = f(\mXpred_{i|k}, \mUpred_{i|k};\bar \theta),~ \label{eq:MPSF_nominal_opt_dynamic_constraints} \\
				            & \mXpred_{i|k} \in \XX,~  				   		\label{eq:MPSF_nominal_opt_state_constraints}                                                 \\
				            & \mUpred_{i|k} \in \UU,~ 					   		\label{eq:MPSF_nominal_opt_input_constraints}                                                  \\
				            & (\mXpred_{i|k}, \mUpred_{i|k})\in\ZZ_c,~	\label{eq:MPSF_nominal_opt_uncertainty_constraints}                          \\
				            & \mXpred_{N|k} \in \mSafe^t,	                \label{eq:MPSF_nominal_opt_safety_constraint}                               \\
				            & \mXpred_{0|k} = x(k).										\label{eq:MPSF_nominal_opt_initial_constraint}
			\end{align}
		\end{subequations}

		\vspace{0.15cm}
		\textbf{Algorithm 1 (Nominal PSF):}

		\vspace{0.15cm}
		\input{algorithms/Nominal_MPSF.alg.tex}

		\vspace{0.25cm}
		\textbf{Illustration of nominal PSF:}\\~\\
		\begin{tikzpicture}[scale = 1.05]
			\vspace*{1cm}
			\input{fig/mpsf_scheme_nom.tex}
		\end{tikzpicture}
	\end{minipage}
	\unskip\hfill\vrule\hfill
	\begin{minipage}[b]{.4\textwidth}
		\begin{center}{\large Predictive Safety Filter}\end{center}
		\vspace{0.5cm}
		\textbf{Online problem:}
		\begin{subequations}\label{eq:MPSF_opt}
			\begin{align} \label{eq:MPSF_opt_cost}
				\min_{\substack{\{\mVpred_{i|k}\}}}
				~           & \mNormGen{u_\LL - \mVpred_{0|k}}                                                                                             \\
				\nonumber
				\text{s.t.} & ~ \forall i\in\mIntInt{0}{N-1}:                                                                                          \\
				            & \mZpred_{i+1|k} = f(\mZpred_{i|k}, \mVpred_{i|k}; \bar \theta),~ \label{eq:MPSF_opt_dynamic_constraints}                    \\
				            & \mZpred_{i|k} \in \bar \XX_i,~  				   		\label{eq:MPSF_opt_state_constraints}                                                     \\
				            & \mVpred_{i|k} \in \bar \UU_i,~ 					   		\label{eq:MPSF_opt_input_constraints}                                                      \\
				            & \mSetValuedConfidence{p_\mSafe}(\mZpred_{i|k}, \mVpred_{i|k}) \subseteq \bar \mCS_{i}^\gamma,~	\label{eq:MPSF_opt_uncertainty_constraints} \\
				            & \mZpred_{N|k} \in \bar \mSafe^t_N,	                \label{eq:MPSF_opt_safety_constraint}                                   \\
				            & \mZpred_{0|k} = x(k).										\label{eq:MPSF_opt_initial_constraint}
			\end{align}
		\end{subequations}

		\vspace{0.15cm}
		\textbf{Algorithm 2 (PSF):}
		
		\vspace{0.15cm}
		\input{algorithms/MPSF.alg.tex}

		\vspace{0.25cm}
		\textbf{Illustration of PSF under uncertainty:}\\~\\
		\begin{tikzpicture}[scale = 1.05]
			\input{fig/mpsf_scheme_uncertain.tex}
		\end{tikzpicture}

	\end{minipage}
	\caption{\small The basic idea of the predictive safety filter explained using
		a nominal, simplified version in the left column and the final method on the right.
		The illustrations show the system state at time $k$ with
		safe backup plan for a shorter horizon obtained from the solution at time $k-1$, 
		depicted in brown, and areas with poor model quality in red.
		An arbitrary learning input $u_\LL$ is certified if a feasible solution towards the
		terminal safe set $\mSafe^t$ can be found, as shown in green. If this new backup
		solution cannot be found and the planning problem
		\eqref{eq:MPSF_nominal_opt}/\eqref{eq:MPSF_opt} is infeasible, the system can be driven to
		the safe set $\mSafe^t$ along the brown previously computed trajectory.
		\textbf{Left (NPSF):} By assuming perfect system knowledge, the computed
		backup plans correspond exactly to the true state dynamics and
		constraints are guaranteed to be satisfied using the nominal backup trajectory.
		\textbf{Right (PSF):} Backup plans are computed w.r.t. the
		nominal expected state $\mZpred$. The true state trajectory lies within a growing tube
		around the nominal state with probability $p_\mSafe$, which needs to be considered
		using tightened constraints according to~\eqref{eq:constraint_tightening}.
	}
	\label{fig:model_predictive_safety}
\end{figure*}

\section{Predictive safety filter}

We first develop an intuitive understanding of the predictive
safety filter by considering a simplified setting and assuming perfect model knowledge
in Section~\ref{subsec:nominal_psf}, which is then extended in
Section~\ref{subsec:psf} to an uncertain model \eqref{eq:general_nonlinear_system},
\eqref{eq:dynamics_model} inferred from data, for which rigorous proofs are provided.
As it will be shown, the presented method establishes safety by relying
on controllability of \eqref{eq:general_nonlinear_system} along
system trajectories, in combination with an efficient mechanism enforcing
the system to carefully enter uncertain areas within the state and
input space.

\subsection{Nominal (simplified) predictive safety filter}
\label{subsec:nominal_psf}

Consider the simplified situation where the
real system dynamics \eqref{eq:general_nonlinear_system} are
perfectly known for some subset of the state and input space,
as specified in the following.

\begin{assumption}\label{ass:set_of_perfect_system_knowledge}
	There exists a set $\ZZ_c \subseteq \XX\times\UU$, such that for
	all $(x, u)\in\ZZ_c$ and some $\bar \theta \in \RR^p$ it holds that
	$f(x, u; \bar \theta) = f(x,u;\theta_{\mathcal R})$.
\end{assumption}

Similarly to \cite{Wabersich2018}, we propose a predictive
safety filter that is not pre-computed, but defined via an optimization problem and
computed on-the-fly. The main working mechanism is the construction
of safe backup plans that, if applied, would keep the system provably
safe in the future, see Figure~\ref{fig:model_predictive_safety} (left) for an illustration.
The backup plans are defined via \eqref{eq:MPSF_nominal_opt}, where
$\{\mXpred_{i|k,N}\}$ denote the planned states computed at the
current time step $k$ and predicted $i$ time steps into the future
with planning horizon $N$ using the corresponding input sequence $\{\mUpred_{i|k,N}\}$.
One of the key challenges in computing the backup plans is to deal
with the fact that a good model is not known in unexplored regions
of the state-space, i.e. $\XX \setminus \ZZ_c$, shown as red
(unconfident model) sets in Figure~\ref{fig:model_predictive_safety}.
In the nominal setting, we simply address this problem by enforcing the system to
strictly stay within the confident model subset $\ZZ_c$ via \eqref{eq:MPSF_nominal_opt_uncertainty_constraints}.
One of the main problems addressed in the next section will be to relax this constraint
to enable cautious exploration of such unconfident subsets. 
The purpose of the remaining constraints in \eqref{eq:MPSF_nominal_opt} is to
construct backup plans that lead the system within state and input
constraints $\XX$ and $\UU$ \eqref{eq:MPSF_nominal_opt_state_constraints},
\eqref{eq:MPSF_nominal_opt_input_constraints} into a safe terminal set $\mSafe^t$
in $N$ steps \eqref{eq:MPSF_nominal_opt_safety_constraint}.

The objective of constructing the backup plans in \eqref{eq:MPSF_nominal_opt} is
to minimize the deviation between the first element of the input sequence
$\mUpred_{0|k,N}$ and the input $u_\LL(k)$ requested by the RL algorithm,
such that $\mUpred_{0|k,N}=u_\LL(k)$ if $u_\LL(k)$ is safe. Conceptually, this
mechanism is similar to QP-based barrier function methods~\citep{Ames2017}, where
the input is adjusted to remain inside an explicitly known invariant set, with
the key difference that here the safe set is implicitly defined. The resulting
nominal predictive safety filter is then given by
$\pi_{\NN\mSafe}(k,x(k),u_\LL(k))=\mUpredOpt_{0|k,N}$, with $\mUpredOpt_{0|k,N}$
being the optimal first control input obtained from \eqref{eq:MPSF_nominal_opt} based 
on a prediction horizon of length $N$.
To ensure constraint satisfaction beyond the planning horizon,
\eqref{eq:MPSF_nominal_opt} utilizes a mechanism common
in predictive control (see e.g. \cite{chen1998quasiInfiniteHorizonMPC}),
by requiring the last state of the sequence $\{\mXpred_{N|k,N}\}$ to lie
in a safe terminal set of system states $\mSafe^t$, for which a
locally valid safety filter $\pi_\mSafe^t$ is known.
\begin{assumption}\label{ass:terminal_safe_set}
	There exists a terminal safe set
	$\mSafe^t\mDef\{x\in\RR^n | a_\mSafe(x)\leq \mOnes{n_\mSafe} \} \subseteq \XX$,
	with $a_\mSafe$ Lipschitz continuous with Lipschitz constant $L_\mSafe$,
	and a corresponding terminal safety filter
	$\mDefFunction{\pi_\mSafe^t}{\mIntGeq{0}\times \XX \times \UU}{\UU}$,
	such that if $x(\bar k)\in\mSafe^t$, then application of
	$u(k)=\pi_\mSafe^t(k,x(k),u_\LL(k))$ implies that
	$x(k)\in\XX$ and $u(k)\in\UU$ for all $k > \bar k$.
\end{assumption}

A terminal safe set $\mSafe^t$ and the corresponding controller $\pi_\mSafe^t$ can be chosen, e.g.,
as a classical terminal set for nonlinear (robust) MPC \citep{chen1998quasiInfiniteHorizonMPC}, 
regions around stable steady-states of system~\eqref{eq:general_nonlinear_system}, or using
expert system knowledge as is demonstrated in Section~\ref{sec:numerical_example}.

Based on problem~\eqref{eq:MPSF_nominal_opt}, the predictive safety filter
$\pi_\mSafe$ is defined by Algorithm~1 (Figure~\ref{fig:model_predictive_safety}, left).
At every time step, we attempt to solve optimization problem~\eqref{eq:MPSF_nominal_opt}.
If problem \eqref{eq:MPSF_nominal_opt} is feasible at time $k$, safety, i.e.,
$x(k)\in\XX$, $u(k)\in\UU$, directly follows from \eqref{eq:MPSF_nominal_opt_state_constraints},
\eqref{eq:MPSF_nominal_opt_input_constraints}.
Due to the generality of the terminal safe set, however, problem \eqref{eq:MPSF_nominal_opt}
may become infeasible for some state $x(k)$, even after being feasible at the previous time step $x(k-1)$. 
Algorithm~1 implements a shrinking horizon mechanism
similar to~\cite{Thomas1994,Grune2014} to also provide a feasible
safe trajectory and input sequence towards the terminal safe set
for this case, as detailed in the following:

Assume that \eqref{eq:MPSF_nominal_opt} was feasible at time
$k-1$ with corresponding optimal input sequence
$\{\mUpredOpt_{i|k-1,N}\}$. Application of $u(k-1) = \mUpredOpt_{0|k-1,N}$
results in a safe state $x(k)$ as depicted in Figure~\ref{fig:model_predictive_safety} (left),
because $(x(k-1), u(k-1))\in\ZZ_c$ by \eqref{eq:MPSF_nominal_opt_uncertainty_constraints}
and therefore $x(k) = f(x(k-1), u(k-1);\bar \theta)\in\XX$ by \eqref{eq:MPSF_nominal_opt_state_constraints}.
At the next time step $k$, if \eqref{eq:MPSF_nominal_opt} is not feasible, we can still solve \eqref{eq:MPSF_nominal_opt} with
a reduced planning horizon $N-1$. This can be easily verified by noting that $\mUpred_{i|k} = \mUpredOpt_{i+1|k-1}$ for
$i\in\mIntInt{0}{N-2}$, i.e. the tail of the previously computed feasible trajectory from time step $k-1$,
is a feasible solution as depicted by the brown trajectory in
Figure~\ref{fig:model_predictive_safety} (left). Feasibility of \eqref{eq:MPSF_nominal_opt}
for a reduced horizon again directly provides $x(k)\in\XX$, $u(k)\in\UU$.

The same holds true in the case that $j<N$ steps were consecutively
infeasible for planning horizon $N$, i.e. \eqref{eq:MPSF_nominal_opt} will then be
feasible with horizon $N-j$ until we reach the safe terminal set.
This shortening of the horizon is implemented in lines 6-7 of Algorithm 1.
If the horizon length reaches $0$, the state is in the terminal set and
$\pi_\mSafe^t$ can be applied to ensure $x(k)\in\XX$, $u(k)\in\UU$ (line 9).
Note again that if \eqref{eq:MPSF_nominal_opt} is feasible at time $k$ (line 3-4),
$\mUpredOpt_{0|k,N}$ can be applied, which ideally results in $u_\LL(k)$
(i.e. objective \eqref{eq:MPSF_nominal_opt_cost} is zero) as shown in
Figure~\ref{fig:model_predictive_safety} (left) together with the
optimal backup plan in green. Algorithm 1 therefore ensures
constraint satisfaction at all time steps,
realizing a predictive safety filter in a receding horizon fashion
with varying prediction length.
The next section will extend the previously introduced basic concept of
the predictive safety filter to consider a data-driven approximate system belief,
represented by \eqref{eq:general_nonlinear_system}, \eqref{eq:dynamics_model},
subject to probabilistic constraint satisfaction \eqref{eq:chance_constraints}.

\subsection{Predictive safety filter}\label{subsec:psf}
A key goal of the safety filter is to support exploration beyond available data
via the learning policy $\pi_\LL(k)$, in which case
Assumption~\ref{ass:set_of_perfect_system_knowledge} does not necessarily hold.
While fast approximate computation of the backup trajectories can still be performed
online using the mean estimate $\bar \theta$ of the parameter $\theta_{\mathcal R}$,
we need to safely handle the resulting non-vanishing model error
\begin{align}\label{eq:state_input_dependent_model_error}
	e(k,\theta_{\mathcal R}) \mDef f(x(k),u(k);\theta_{\mathcal R}) - f(x(k), u(k); \bar \theta).
\end{align}
In the following, we first treat uncertainty via a uniform error
bound to introduce the safety filter for uncertain systems,
which is then extended to consider a less conservative bound and
impose it as a constraint in the filter planning problem,
in order to reduce conservatism.

\emph{Uniformly bounded model error:}
Assume that the model error
with respect to the point estimate $\bar \theta$ can be bounded as
\begin{align}\label{eq:uniform_error_set}
 &\Pr(e(k,\theta_{\mathcal R})\in\mathcal E \text{ for all~} k\in\mIntGeq{0})\geq p_\mSafe
\end{align}
with $\mCS \mDef \{e\in\RR^n | a_\mCS(e) \leq \mOnes{n_\mCS}\}$ and
$\mDefFunction{a_\mCS}{\RR^n}{\RR^{n_\mCS}}$ Lipschitz continuous and
linearly bounded from below, i.e. constants $L_{a_{\mCS}}>0$ and $c_{\mCS}>0$
exist such that $c_\mCS \mNormSmall{2}{e}\leq \mNormSmall{2}{a_\mCS(e)}
\leq L_{a_{\mCS}}\mNormSmall{2}{e}$, which implies compactness of $\mCS$.
In this case, the filter can still compute backup plans using the point estimate
$\bar \theta$, however, in contrast to the nominal case in Section~\ref{subsec:nominal_psf},
the constraints in \eqref{eq:MPSF_nominal_opt} are modified such that prediction
errors induced by \eqref{eq:state_input_dependent_model_error} are compensated to
ensure constraint satisfaction.

We denote the nominal (expected) system states as $\{\mZpred_{i|k}\}$,
corresponding to the nominal input sequence $\{\mVpred_{i|k}\}$ according to
$\mZpred_{i+1|k} = f(\mZpred_{i|k}, \mVpred_{i|k}; \bar \theta)$.
Due to the model error \eqref{eq:state_input_dependent_model_error},
we need to address the fact that potentially $x(k+1) \notin \XX$,
i.e. $A_x x(k+1) \nleq \mOnes{n_x}$, when applying the nominal input
$\mVpred_{0|k}$, even though the corresponding nominal predicted state satisfies
$\mZpredOpt_{1|k}\in\XX$.
A common strategy for achieving robustness in predictive control 
is to tighten the constraints by leveraging controllability along any possible
predicted state sequence $\{\mZpred_{i|k}\}$~\citep{Mayne2014}. Intuitively
speaking, controllability enables efficient compensation of deviations
$x(i) - \mZpred_{i|k}$ via feedback control.
More precisely, the possible deviations can be bounded by a decay constant,
expressed by a parameter $\rho$, at which a controller can compensate
disturbances of a certain magnitude, defined proportionally to a parameter
$\epsilon$. Using these two measures, deviations from the planned nominal trajectory
can be compensated via an iterative tightening of the constraints. This allows
a flexible response to upcoming disturbances at the desired
probability level $p_\mSafe$ during consecutive time steps
via replanning, thereby enabling overall constraint satisfaction.
Following \cite{Koehler2018b}, we tighten the constraints
\eqref{eq:MPSF_nominal_opt_state_constraints}, \eqref{eq:MPSF_nominal_opt_input_constraints},
and \eqref{eq:MPSF_nominal_opt_safety_constraint} in the computation of the backup plans as
\begin{subequations}\label{eq:constraint_tightening}
\begin{align}
	\bar\XX_i 		&\mDef \{ x \in \RR^n | A_x x \leq (1-\epsilon_i)\mOnes{n_x}\},         \\
	\bar \UU_i      & \mDef\{ u \in \RR^m | A_u u \leq (1-\epsilon_i)\mOnes{n_u}\},              \\
	\bar \mSafe_N^f & \mDef \{ x \in \RR^n | a_\mSafe(x) \leq (1-\epsilon_N)\mOnes{n_\mSafe}\}, 
\end{align}
\end{subequations}
implementing a trade-off between compensation and magnitude of disturbances
via the converging recursion
\begin{align}
	\begin{rcases*}
		\epsilon_0 \mDef 0\\
		\epsilon_{i+1} \mDef \epsilon_i + \sqrt{\rho}^i \epsilon
	\end{rcases*}
	\Rightarrow \epsilon_i = \epsilon\frac{1-\sqrt{\rho}^i}{1-\sqrt{\rho}},
	\label{eq:constraint_tightening_sequence}
\end{align}
with design parameter $\epsilon>0$ and parameter $\rho\in(0,1)$ that depends on
system \eqref{eq:general_nonlinear_system} as follows.
\begin{assumption}
	\label{ass:local_incremental_stabilizability}
	There exists a control policy $\mDefFunction{\pi}{\XX\times\XX\times\UU}{\RR^m}$,
	a function $\mDefFunction{V}{\XX\times\XX\times\UU}{\RR_{\geq 0}}$, which is
	continuous in its first argument and satisfies $V(x,x,v)=0$ for all
	$x\in\XX$, $v\in\UU$, and parameters $c_l,c_u,\delta,\pi_{\mathrm{max}}\in\RR_{\geq 0}$,
	$\rho\in(0,1)$, such that for a given $\bar \theta\in\RR^q$
	the following properties hold for all $x,\mu\in\XX$, $v,v^+\in\UU$:
	\begin{align*}
		c_l\mNorm{2}{x-\mu}^2 \leq V(x,\mu,v) \leq c_u \mNorm{2}{x-\mu}^2
	\intertext{and if in addition $V(x,\mu,v) \leq \delta$ then}
		\mNorm{2}{\pi(x,\mu,v)-v} \leq \pi_{\mathrm{max}}\mNorm{2}{x-\mu}           
		\\
		V\left(f(x, \pi(x,\mu,v);\bar \theta),f(\mu,v;\bar \theta),v^+ \right) \leq \rho V(x,\mu,v).
	\end{align*}
\end{assumption}
Informally, Assumption~\ref{ass:local_incremental_stabilizability} defines
how well the uncertain system can be controlled in a neighborhood of
predicted nominal backup plans $\{\mZpredOpt_{i|k}\}$. Intuitively speaking,
considering the task of tracking a reference trajectory as an optimal control problem
with value function $V$ (using for example a linear quadratic regulator in
the linear dynamics setting), parameter $\rho$ defines `how fast' a reference
can be reached, measured in terms of the contraction rate of the
optimal tracking cost $V$.
Interestingly, this translates into a system-theoretic requirement on
system~\eqref{eq:general_nonlinear_system}, or more precisely to
local incremental stabilizability, which can be formally verified
based on a system linearization, as discussed in
\citet[Prop. 1]{Koehler2018a}. The condition can also be found in
Appendix~\ref{ass:sufficient_condition_locally_incremental_stabilizable}
and provides explicit choices for $V$ and $\pi$. It is, however, important
to note that the final algorithm only requires \emph{existence} of the
policy $\pi$ and the corresponding function $V$, rather than their explicit form.

These concepts lead to a robustified version of the nominal predictive
safety filter defined in \eqref{eq:MPSF_opt} and Algorithm~2
(Figure~\ref{fig:model_predictive_safety}, right), where we
omit~\eqref{eq:MPSF_opt_uncertainty_constraints} in the case of uniformly
bounded errors \eqref{eq:uniform_error_set}. Assumption~\ref{ass:local_incremental_stabilizability}
ties the model uncertainty \eqref{eq:uniform_error_set} to the constraint
tightening \eqref{eq:constraint_tightening} to ensure the existence of a safe backup plan
at all times and allows extension of the arguments for the nominal case
to a probabilistic model belief. If \eqref{eq:MPSF_opt} is feasible at time $k-1$
and the error bound $\mCS$ according to~\eqref{eq:uniform_error_set}, i.e.
$\max_{e\in\mCS}\mNorm{2}{e}$,
is sufficiently small with respect to $\epsilon$ (see also
Sections~\ref{subsec:design_parameters} for a detailed discussion)
with probability $p_\mSafe$,
then at time $k$, the input sequence based on the plan computed at
time step $k-1$
\begin{align}
	\mVpred_{i|k}\mDef \pi(\mZpred_{i|k},\mZpredOpt_{i+1|k-1,N},\mVpredOpt_{i+1|k-1,N})
\end{align}
for $i\in\mIntInt{0}{N-2}$ with $\mZpred_{0|k}=x(k)$, $\pi$ according
to Assumption~\ref{ass:local_incremental_stabilizability}, and $\mZpred_{i|k}$ according
to \eqref{eq:MPSF_opt_dynamic_constraints}, provides a feasible solution
to \eqref{eq:MPSF_opt} with planning horizon $N-1$ (Algorithm 2, line 6)
with probability $p_\mSafe$. Again, the tracking policy $\pi$ is only used in order
to show that a solution to \eqref{eq:MPSF_opt} exists, but it is not needed for
implementation of the approach. The same argument holds true for
all $\bar k \in \mIntInt{k+1}{k + N-1}$ until the terminal set is reached (line 10),
which allows us to establish safety at all times similarly to the nominal case.
A formal proof will be given in the following for the more general case including a
constraint on model confidence.

\emph{Planning in confident subspaces:}
To reduce conservatism introduced by uniformly overbounding the uncertainty
in \eqref{eq:uniform_error_set}, a central novelty in the proposed safety filter
is the ability to restrict planning to regions in the state and input space $\ZZ_c$
(see also Figure~\ref{fig:model_predictive_safety}) where we are sufficiently confident
about the system dynamics. More precisely, we restrict predictions to subspaces
where the model error~\eqref{eq:state_input_dependent_model_error} is contained in a
pre-specified, reduced allowable error set of the form
\begin{align}\label{eq:admissible_error}
	\mCS^\gamma \mDef \{e\in\RR^n | a_\mCS(e) \leq \gamma\mOnes{n_\mCS}\}, \quad 0<\gamma\leq 1
\end{align}
with scaling factor $\gamma$, which allows easy adjustment of the maximum
error magnitude due to the relation $e\in\mCS^\gamma \Rightarrow \mNormSmall{2}{e}\leq(\sqrt{n_\mCS}/c_\mCS)\gamma$,
see proof of Lemma~\ref{lem:feasible_solution}. A simple approach would be to compute the region $\ZZ_c$
offline and add it as an additional state and input constraint, as was similarly done for the case of
linear dynamics with state dependent uncertainties by \citet{soloperto2018learning}.
However, it is in general difficult to compute $\ZZ_c$ analytically and in addition,
the set needs to be recomputed once the model belief~\eqref{eq:general_nonlinear_system},
\eqref{eq:dynamics_model} is
updated based on observed data. We therefore reformulate the requirement to
stay inside $\ZZ_c$ as an implicit constraint, avoiding the explicit computation
of $\ZZ_c$, and include it in the online predictive safety filter problem
\eqref{eq:MPSF_opt} using the following definition:

\begin{definition}\label{def:set_valued_model_confidence_map}
	A set-valued map $\mSetValuedConfidence{p_\mSafe}$ mapping states
	and inputs from $\RR^n \times \RR^m$ to subsets of $\mCS$ with $\mCS\subset\RR^n$
	is a \emph{set-valued model confidence map}
	associated with \eqref{eq:general_nonlinear_system}, \eqref{eq:dynamics_model},
	for a given $\bar \theta\in\RR^q$ at probability level $p_\mSafe>0$, if
	\begin{align}
		\Pr\left(
			e(k,\theta_{\mathcal R})\in\mathcal E_{p_s}(x(k),u(k))),~
			\begin{matrix}
				\forall k\in\mathcal I_{[0,\bar N]} \\
				\forall x(k)\in\RR^n \\
				\forall u(k)\in\RR^n
			\end{matrix}
		\right)\geq p_s
		\label{eq:set_valued_model_confidence_map}	
	\end{align}
	holds, with $e(k,\theta_{\mathcal R})$ as defined in \eqref{eq:state_input_dependent_model_error}.
\end{definition}
While $\mCS$ must include all sufficiently common model errors, 
$\mSetValuedConfidence{p_\mSafe}(x,u)$ must only include errors that are sufficiently
common at $(x,u)$.
Note that according to Definition~\ref{def:set_valued_model_confidence_map}
it is not sufficient to guarantee that \eqref{eq:set_valued_model_confidence_map}
holds for some $k$, but it has to hold for all $0 \leq k\leq \bar N$
to ensure safety for all times, including also the case $\bar N\rightarrow\infty$.
In practice, it might be challenging to select a parametric system class and to infer
a representative parameter distribution $p(\theta | \mData)$ from a data set $\mData$ that
allows construction of a set-valued model confidence map. In the following we
therefore briefly discuss an example of how to design~\eqref{eq:set_valued_model_confidence_map}
from data using Bayesian regression and refer to \citet[Section 3]{Hewing2020} for a review of
data-driven prediction models that provide bounds on the model uncertainty.

\emph{Data-driven set-valued model confidence map:}
	Consider a Bayesian description of \eqref{eq:general_nonlinear_system}
	with prior distribution $p(\theta)$ and
	posterior estimate $ \theta \sim p(\theta|\mData)$, 
	inferred from available system data $\mData\mDef\{(x_i,u_i), f(x_i,u_i;\theta_{\mathcal R})\}_{i=1}^{N_\mData}$.
	Define a confidence region $\mConfidenceSet{p(\theta|\mData)}{p_\mSafe}$
	at probability level $p_\mSafe>0$ of the random parameters $\theta$ as
	$\Pr(\theta\in\mConfidenceSet{p(\theta|\mData)}{p_\mSafe})\geq p_\mSafe$.
	A set-valued model confidence map according to
	Definition \ref{def:set_valued_model_confidence_map} is then given by
	\begin{align}\label{eq:parametric_set_valued_confidence_map}
		&\mSetValuedConfidence{p_\mSafe}(x,u)= \\\nonumber
		&\{e\in\RR^n|e=f(x,u,\theta) - f(x,u; \bar \theta)~,
			\theta \in \mConfidenceSet{p(\theta|\mData)}{p_\mSafe} \},
	\end{align}
	as it follows from the definition of $\mConfidenceSet{p(\theta|\mData)}{p_\mSafe}$
	that
	\begin{align*}
    	\Pr(\star) & 
            & \geq \underbrace{\Pr(\star | \theta_{\mathcal R} \in \mathcal C_{p_s}(p(\theta|\mathcal D)))}_{=1 \text{ by definition of } \mathcal E_{p_s}}
                \underbrace{\Pr( \theta_{\mathcal R} \in \mathcal C_{p_s}(p(\theta|\mathcal D)))}_{\geq p_s}
	\end{align*}
	with $\star$ as shorthand for the random event introduced in
	\eqref{eq:set_valued_model_confidence_map}.
	Note that similar set-valued model confidence maps can be obtained
	when using non-parametric Gaussian process regression, by assuming that the
	system dynamics \eqref{eq:general_nonlinear_system} have bounded norm in
	a reproducing kernel Hilbert space \citep[Theorem~2]{Chowdhury2017}.
	In case of large amounts of available data on the whole state and input
	space a uniform confidence map can be selected using, e.g., Lipschitz arguments similar
	to~\citet{limon2017learning}, i.e.,
	$\forall x\in \XX, ~u\in\UU :~
		\mSetValuedConfidence{p_\mSafe}(x,u) \subseteq \mCS^1$,
	reducing to the special case \eqref{eq:uniform_error_set}.
	\END
	 
As discussed for the case of uniformly bounded errors, the tightened
constraints~\eqref{eq:constraint_tightening} ensure safety, if~\eqref{eq:uniform_error_set}
holds for $\max_{e\in\mCS^\gamma}\mNorm{2}{e}$ small enough. Since $e(k,\theta_{\mathcal R})$ is
unknown, and we cannot simply impose $e(k,\theta_{\mathcal R})\in\mCS^\gamma$ in~\eqref{eq:MPSF_opt}
to restrict planning to confident subsets, we make use of the model confidence map in
Definition~\ref{def:set_valued_model_confidence_map} to enforce 
\begin{align}\label{eq:non_tightened_uncertainty_constraint}
	\mSetValuedConfidence{p_\mSafe}(x(k), u(k)) \subseteq \mCS^\gamma,
\end{align}
implying $e(k,\theta_{\mathcal R})\in\mCS^\gamma$ with probability $p_\mSafe$.
To this end, we impose \eqref{eq:MPSF_opt_uncertainty_constraints} on the nominal plan
$\{\mZpred_{i|k}\}$, $\{\mVpred_{i|k}\}$, where constraint \eqref{eq:non_tightened_uncertainty_constraint} 
is tightened similarly to \eqref{eq:constraint_tightening} using
\begin{align}\label{eq:constraint_tightening_set_valued_map}
	\bar \mCS_i^\gamma  \mDef\{ e \in \RR^n | a_\mCS(e) \leq \gamma(1-\epsilon_i)\mOnes{n_\mCS}\}.
\end{align}
The tightening again ensures the existence of a feasible solution when replanning
with a shorter horizon (Algorithm 2, line 6).
In order for the filter to ensure safety in probability using
\eqref{eq:MPSF_opt_uncertainty_constraints},
we additionally require that small changes of the nominal predicted
trajectory must not lead to arbitrary large changes in the model confidence by
assuming that the set-valued model confidence map is Lipschitz continuous
in terms of the Hausdorff metric (see Definition~\ref{def:hausdorff_metric} and
\ref{def:Lipschitz_continuous_mapping} in
the appendix).
\begin{assumption}
	\label{ass:set_valued_model_confidence_map}
	There exists a set-valued model confidence map $\mSetValuedConfidence{p_\mSafe}$
	associated with \eqref{eq:general_nonlinear_system},
	\eqref{eq:dynamics_model}, which is Lipschitz continuous
	with Lipschitz constant $L_{\mSetValuedConfidence{p_\mSafe}}$
	under the Hausdorff metric with respect
	to $d_{\mathcal \RR^m}(a,b)\mDef\mNorm{2}{a-b}$.
\end{assumption}
Note that for common models, such as Gaussian Processes,
Assumption~\ref{ass:set_valued_model_confidence_map} is generally fulfilled,
compare also with \citet[Proposition 11]{Fisac2019}.
The above assumptions allow for extension the ideas from the uniform error
bound to make use of a potentially reduced error bound that is ensured by imposing
\eqref{eq:MPSF_opt_uncertainty_constraints} on the backup plan, and thereby
again characterize the relation between the tightening
$\epsilon$ in \eqref{eq:constraint_tightening},\eqref{eq:constraint_tightening_set_valued_map} and
the specified tolerated model error~\eqref{eq:admissible_error}.
This leads us to the main result of the paper, showing that the proposed predictive
safety filter guarantees safety in probability at all times according to~\eqref{eq:chance_constraints}.
\begin{theorem}\label{thm:PSF}
	Let Assumptions~\ref{ass:terminal_safe_set}, \ref{ass:local_incremental_stabilizability} and
	\ref{ass:set_valued_model_confidence_map} hold and select a tightening factor $\epsilon > 0$.
	If $L_{\mSetValuedConfidence{p_\mSafe}}$ in Assumption~\ref{ass:set_valued_model_confidence_map}
	is sufficiently small, i.e. if $L_{\mSetValuedConfidence{p_\mSafe}} \leq c \epsilon$ for a
	sufficiently small constant $c>0$,
	then one can always select a sufficiently small $\gamma > 0$ such that the initial
	feasibility of \eqref{eq:MPSF_opt} for $x(0)$ implies that $u(k)=\pi_\mSafe(k, x(k), u_\LL(k))$
	as defined in Algorithm~2 ensures safe system operation	according to~\eqref{eq:chance_constraints}.
\end{theorem}
The proof is provided in the appendix.
Theorem~\ref{thm:PSF} implies that for sufficiently small
$L_{\mSetValuedConfidence{p_\mSafe}}$ one can specify a constraint tightening
through $\epsilon$ and impose a corresponding sufficiently small
admissible error set scaling $\gamma$ in~\eqref{eq:admissible_error}, such
that if \eqref{eq:MPSF_opt} is initially feasible for $x(0)$, application
of Algorithm~2 will keep the system safe in probability. Thereby, the upper bound on
$L_{\mSetValuedConfidence{p_\mSafe}}$ results from the linear lower bound on
$a_\mCS$ in~\eqref{eq:uniform_error_set}, \eqref{eq:admissible_error} and intuitively
means that the set-valued model confidence map estimate in
Assumption~\ref{ass:set_valued_model_confidence_map} can only
change at a specific rate with changing states or inputs that are linearly
bounded in terms of the tightening fraction $\epsilon$.

While the exact values of the bounds
derived in the proof of Theorem~\ref{thm:PSF} might be
difficult to compute explicitly for design of the PSF,
the corresponding analysis in Appendix~\ref{app:proof_of_theorem}
unveils inner relations of all design parameters that
can be used for efficient practical tuning guidelines as presented
in Section~\ref{subsec:design_parameters}. A specific choice of parameters
can then be verified as described in Appendix~\ref{app:design_certification}.
\begin{remark}
	While the combination of the proposed safety filter
	with a learning-based controller naturally restricts exploration,
	the probabilistic model together with probabilistic constraints
	provide a principled way to adjust the probability associated
	with the confident subset, and thereby allow for some
	exploration beyond the available data as illustrated in the numerical
	example in Section~\ref{sec:numerical_example}. Large model uncertainties
	might, however, cause infeasibility of the PSF problem~\eqref{eq:MPSF_opt}
	at the initial condition of the system. This would either require
	the enlargement of the prediction horizon $N$ or the lowering of the
	probability level for safety $p_\mSafe$ in~\eqref{eq:chance_constraints},
	see also Section~\ref{subsec:design_parameters} for practical tuning guidelines.
\end{remark}
\subsection{PSF design parameters}\label{subsec:design_parameters}
In the following, we provide a more detailed discussion of the design parameters
and how to select them.
\par
$\rho\in (0,1)$:  Minimum contraction rate (`speed') at which the
system can reduce the distance (in terms of an appropriate energy function)
to a nominal reference trajectory. For example, consider the extreme case
of a deadbeat controller that can steer the system to any given reference
in one time step. According to Assumption~\ref{ass:local_incremental_stabilizability}
this translates into $\rho \approx 0$, which renders the constraint
tightening~\eqref{eq:constraint_tightening} constant after one time step.
In contrast, systems with very slow convergence rates are characterizes
with $\rho \approx 1$, corresponding to the worst-case in
terms of the constraint tightening~\eqref{eq:constraint_tightening}.
A cautious choice is therefore $\rho\mDef 0.9\bar 9$.

$\epsilon > 0$: Constraint tightening factor along predicted
backup plans. While $\rho$ depends on intrinsic
system properties, $\epsilon$ is a design parameter that
allows a trade off of the maximum tolerated prediction model
errors~\eqref{eq:admissible_error}, i.e. the magnitude
of $\gamma$ against the conservatism of the predictive safety filter,
i.e. the constraint tightening. This can be seen explicitly through the sufficient
bounds on $\gamma$ in the proof of~Lemma~\ref{lem:feasible_solution},
\eqref{eq:app_maximum_error_magnitude}, which are linear in $\epsilon$, i.e.
$\gamma \leq c_{\gamma} \epsilon$. To satisfy the lower bound according to
Theorem~\ref{thm:PSF} while preventing the tightened sets~\eqref{eq:constraint_tightening}
from being empty at the end of the planning horizon, a cautious initial choice is given by
$\epsilon \leq (1-\sqrt{\rho})/(1-\sqrt{\rho}^N)$, where the prediction horizon length $N$
can additionally be reduced to account for larger values of $\epsilon$.

$\gamma > 0$: From the definition of $\mCS^\gamma$ in
\eqref{eq:uniform_error_set} and \eqref{eq:admissible_error} it follows
that $\max_{e\in\mCS^\gamma}\mNorm{2}{e}\leq \gamma\sqrt{n_\mCS}/c_\mCS$ holds, i.e.
$\gamma$ linearly affects the maximum allowable uncertainty
in the confident subset of the state space.
From Theorem~\ref{thm:PSF} it follows that for any valid $(\rho, \epsilon)$
a $\gamma > 0$ exists, such that initial feasibility of the predictive safety
filter implies chance constraint satisfaction according to~\eqref{eq:chance_constraints}.
The bound on $\gamma$ is provided in~\eqref{eq:app_maximum_error_magnitude} in the appendix.
Since smaller values of $\gamma$ render the set-valued model confidence map
constraint~\eqref{eq:MPSF_opt_uncertainty_constraints} more conservative,
the goal during tuning is to find the largest tolerable uncertainty
$\gamma$ for a given configuration $(\rho, \epsilon)$.

$p_\mSafe \in [0, 1]$: Desired probability level of safety according to
\eqref{eq:chance_constraints}. Depending on the application, one might
consider lowering the probability level $p_\mSafe$ for an efficient
exploration phase, before enforcing larger values $p_\mSafe$ for cautious
long term operation. More precisely, the limit case $p_\mSafe= 0$ allows selection of
$\mSetValuedConfidence{p_\mSafe} = \{0\}$, which virtually disables the
set-valued model confidence map~\eqref{eq:set_valued_model_confidence_map}
and backup plans are not restricted to confident subsets anymore. In turn,
selecting $p_\mSafe \approx 1$ results in a robust version of the
predictive safety filter and therefore limits exploration.

In summary, the small number of design parameters and their interpretability
allow for an efficient design of the PSF without more involved
and potentially conservative design procedures to formally satisfy the required
assumptions, e.g., Assumption~\ref{ass:local_incremental_stabilizability}
(see also~\citet{Koehler2018a}) or Assumption~\ref{ass:set_valued_model_confidence_map}.
A cautious initial selection of the design parameters is given by $\rho\approx 1$,
$\epsilon\approx N^{-1}$, and $\gamma$ possibly small for a required probability
level $p_\mSafe$ and planning horizon $N$. A practical choice of $\mCS^\gamma$ is discussed
in Section 5. The set of parameters can then be verified offline as described in
Appendix~\ref{app:design_certification}. If these conservative design
parameters cannot be verified, then either the planning horizon $N$ can be reduced to
increase $\epsilon$, Assumption~\ref{ass:local_incremental_stabilizability}
does not hold, more data needs to be collected, or the prior information about $\theta$
needs to be refined to render the set-valued model model confidence map according
to Assumption~\ref{ass:set_valued_model_confidence_map} less conservative.

\section{Application to numerical examples}\label{sec:numerical_example}
\subsection{Swing-up: Safe exploration beyond initial data}\label{subsec:pendulum}
\begin{figure}
	\centering
	\includegraphics[width=0.47\textwidth]{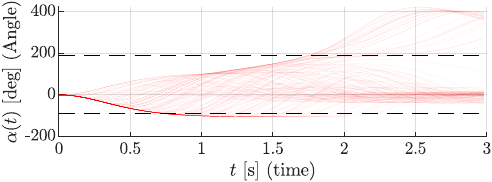}
	\includegraphics[width=0.47\textwidth]{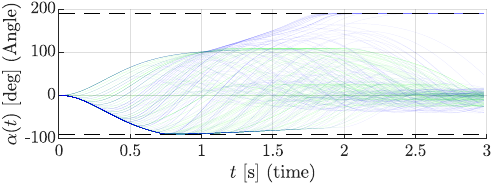}
	\includegraphics[width=0.47\textwidth]{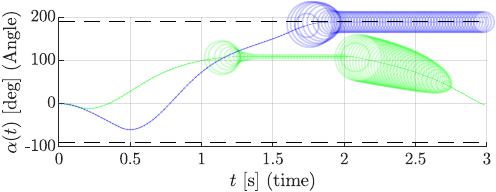}
	\caption{
		Comparison of closed-loop swing-up trajectories during 120 learning episodes 
		under challenging position constraints (dashed lines).
		\emph{Top:} Red lines show different learning episodes based on
		an unsafe learning policy.
		\emph{Middle:} Closed-loop learning trajectories using
		a predictive safety filter with $10$ data points (green) and $18000$ data points
		(blue).
		\emph{Bottom:} Resulting safe optimal closed-loop performance.
		The circle radii indicate the relative magnitude of safety ensuring
		modifications of the learning policy.}
	\label{fig:inverted_pendulum_trajectory}
\end{figure}
We consider the classical
control problem of swinging up a pendulum from the downward
position with angle $\alpha=0~\mathrm{[deg]}$ to the upward position
($\alpha=180~\mathrm{[deg]}$) with limited input authority,
unknown system parameters, and under challenging safety
constraints of the form $-90~\mathrm{[deg]} \leq \alpha \leq 190~\mathrm{[deg]}$, such that
the pendulum is not allowed to tip over, once the upward position has been reached.
The discretized dynamics $x(k+1)=f(x(k),u(k))$ are simulated using $x_1(k+1) = x_1(k) + hx_2(k)$ and
$x_2(k+1) = x_2(k) -\frac{hg}{l}\sin(x_1(k)) - \frac{h\eta}{ml^2}x_2(k) + \frac{h}{ml^2} u(k)$,
where $x_1(k)=\alpha(k)$ is the angle, $x_2(k)=\dot\alpha(k)$ is the angular velocity at time step
$k$, $h=0.02~\mathrm {[s]}$ is the discretization interval, $g=9.81~\mathrm{[m/s^2]}$ is the
gravity constant, $l=0.5~\mathrm{[m]}$ is the length, $m=0.15~\mathrm{[kg]}$ is the mass,
$\eta=0.1~\mathrm{[Nms / rad]}$ is the friction and the input torque $u$ is restricted to
$|u|\leq 0.7~\mathrm{[Nms / rad]}$.\\
\emph{Unsafe learning policy $\pi_\LL$:}
For learning the swing-up task, we consider an episodic learning
setting with horizon length $\bar N=120$ and parametrize a bang-bang open-loop
input signal as
\begin{align}
	\label{eq:learning_based_input_signal}
	\pi_\LL(k;k_{s_1},k_{s_2}) = 
		\begin{cases}
			-0.7, \quad & k\leq k_{s_1},\\
			0.7, \quad & k_{s_1}\leq k\leq k_{s_2}, \\
			0, & \text{else},
		\end{cases}
\end{align}
with switching times $k_{s_1}$ and $k_{s_2}$ subject
to $k_{s_1}\in [0,k_{s_2}]$ and $k_{s_2}\in[0, 150]$.
The learning objective is defined as $\ell(x,k) = (\alpha(k) - \pi)^2
	+ (\mVpredOpt_{0|k}-\pi_\LL(k;k_{s_1},k_{s_2}))^2$, where the first term
describes the distance to the desired upward position given
by $180~\mathrm{[deg]}$, while the second term penalizes safety-ensuring
interventions by the safety filter and therefore accelerates learning convergence
as discussed in~\citet{Akametalu2014}. Efficient learning-based optimization
of the parameters $k_{s_1}$ and $k_{s_2}$ is performed
using Bayesian Optimization as described in~\citet{Neumann-Brosig2019},
selecting parameter configurations that
automatically trade-off exploration of the parameter space and exploitation
of promising subsets. While direct application of this learned policy
yields a swing-up after some episodes, it causes significant constraint
violations as shown in Figure~\ref{fig:inverted_pendulum_trajectory}~(top),
motivating the application of the presented safety filter in the following.

\emph{Predictive safety filter from data:}
The transition model~\eqref{eq:general_nonlinear_system},
\eqref{eq:dynamics_model} is obtained via
linear Bayesian regression~\citep{rasumssen2006Gaussian},
i.e., $f(x,u) = \theta^\top \phi(x)$, with $\phi(x,u)=[x_1,x_2,\sin(x_1),u]^\top$, unknown
parameters $\theta \in\RR^{2\times 4}$ with Gaussian priors, and Gaussian noise on
obtained system measurements.
The set-valued model confidence map according to Definition~\ref{def:set_valued_model_confidence_map}
for parametric uncertainties is given by~\eqref{eq:parametric_set_valued_confidence_map} and
can here be defined as $\mSetValuedConfidence{p_\mSafe}(x,u) = \{e\in \RR^n |
	e^\top \Sigma^{-1}(x,u) e \leq \chi_2^2(p_\mSafe)\}$ where
$\Sigma(x,u) = \mathrm{diag}((\sigma_{i}^2(x,u))_{i=1,2})$ with $\sigma_{i}^2(x,u)$ being the
posterior variance of $f_i(x,u)$ conditioned on data and $\chi_2^2(p_\mSafe)$ the chi-squared
distribution of degree $2$ \citep{Slotani1964}.
Starting with 10 data points around the downward position, we update the model belief
using the acquired data after each episode. 
The tightening was experimentally chosen to $\rho=0.999,\epsilon=0.02$
using sampled system realizations of the posterior distribution as described
in Section~\ref{app:design_certification}.
The corresponding admissible error set $\mCS^\gamma$ is defined as the $2$-norm ball
with radius $\gamma = 0.02$. Consequently, set-valued map constraints of the form
$\mSetValuedConfidence{p_\mSafe}(x,u) \subseteq \bar\mCS^\gamma_i$ according to \eqref{eq:MPSF_opt_uncertainty_constraints}
can be efficiently implemented as $\sqrt{\sigma_{f_j}^2(x,u)\chi_2^2(p_\mSafe)} \leq (1-\epsilon_i)0.02$ for $j=1,2$,
i.e. by enforcing all semi-axes of $\mSetValuedConfidence{p_\mSafe}(x,u)$
to be smaller or equal than the radius of the admissible error set $\bar\mCS^{0.02}_i$.
The desired probability of chance constraint satisfaction $p_\mSafe$ was chosen as $0.95$.
As the terminal safe set, we select
$\mSafe^t \mDef \{ \alpha, \dot\alpha| -30~\mathrm{deg} \leq \alpha \leq 30~\mathrm{ [deg]},
~|\dot\alpha|\leq 30~\mathrm{ [deg/sec]}$ with $\pi_\mSafe^t = 0$. The resulting 
problem \eqref{eq:MPSF_opt} with planning horizon $N=50$ was solved in real-time using
Ipopt \citep{wachter2006implementation} together with the CasADi framework~\citep{Andersson2018}
for automatic differentiation.

\emph{Results:}
Combining the learning-based swing up policy with a predictive safety
filter based on only $10$ initial data points around the stable downward position
results in cautious closed-loop system trajectories that are displayed as green lines in
Figure~\ref{fig:inverted_pendulum_trajectory} (middle).
The corresponding optimal solution after 120 learning episodes is depicted in
Figure~\ref{fig:inverted_pendulum_trajectory} (middle). We can then leverage the
cumulated data from the first experiment ($18000$ data samples) to refine
the prediction model of the safety filter. The additional data enables a significantly less conservative
learning behavior, see blue trajectories in Figure~\ref{fig:inverted_pendulum_trajectory} (middle),
and supports a complete swing-up, which demonstrates safe exploration beyond available
data. The corresponding optimal solution after 120 learning episodes is shown in
Figure~\ref{fig:inverted_pendulum_trajectory} (middle), where the circle radii
indicate the magnitude of safety ensuring modifications of the learning policy.

\subsection{Safe data-driven quadrotor learning control}\label{subsec:quadrotor}
\begin{figure}
	\centering
	\includegraphics[width=0.178\textwidth,valign=t]{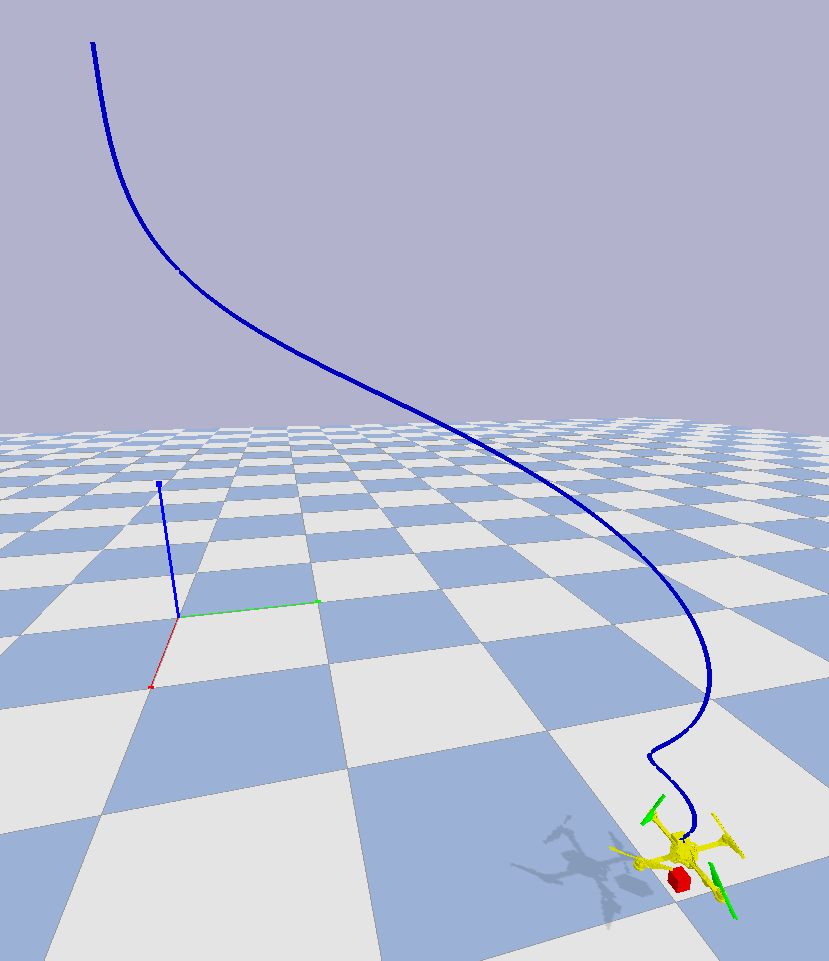}
	\hfill
	\includegraphics[width=0.27\textwidth,valign=t]{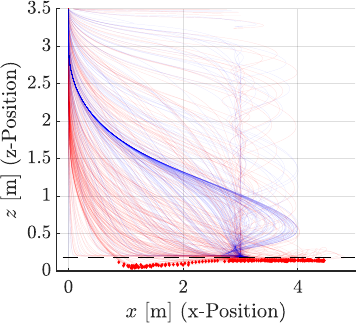}
	
	\vspace{.5cm}

	\includegraphics[width=0.48\textwidth]{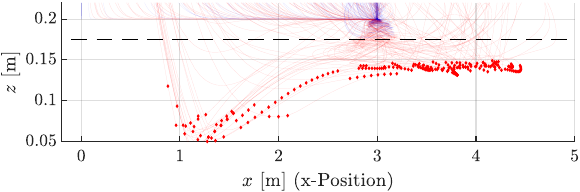}
	\caption{
	Quadrotor experiment using the Bullet Physics SDK
		\citep{coumans2019}. \emph{Top-left:} Graphical interface
		showing the optimal safe trajectory (blue line).
		\emph{Top-right:} Quadrotor trajectories projected on
		the $x-z$ plane using an unsafe policy search (red lines)
		and the safety augmented policy search (blue lines).
		\emph{Bottom:} Zoom-in of top-right plot, where
		red dots represent states with $<0.01~\mathrm{[m]}$
		minimum distance to the ground, which we classify as
		ground contact.
		}
	\label{fig:quadrotor}
\end{figure}
To demonstrate the presented method for a more challenging simulation
example, we consider the AscTec Hummingbird drone, simulated in the
Bullet Physics SDK \citep{coumans2019}, see Figure~\ref{fig:quadrotor} (Top-left),
in combination with a single rotor force model~\citep{Furrer2016}.
We employ a two-layer control structure, where an inner PD
control loop takes desired pitch, roll and vertical acceleration in the body frame
and outputs control signals to the motors.
This enables modeling of the inner controlled system around the hovering equilibrium
as done in~\citet{Hu2018} using 10 states $\tilde x\in\RR^{10}$, three inputs $u\in\RR^3$,
and dynamics of the form $\tilde x(k+1)=\theta^\top \phi(\tilde x,u)$.
State constraints are given by minimum height $z\geq 0.175~\mathrm{[m]}$
and maximal vertical velocity $|\dot z|\leq 1~\mathrm{[m/s]}$ to avoid ground
contact and to ensure the validity of the dynamics model. The inputs to
the inner control loop are normalized to $|u_i|\leq 1$.

\emph{Unsafe learning policy $\pi_\LL$:}
The overall goal is to approach the landing position
$x=3~\mathrm{[m]}$, $y=2~\mathrm{[m]}$, and
$z=0.2~\mathrm{[m]}$ close to the ground, which is indicated
by the red cube in Figure~\ref{fig:quadrotor} (Top-left),
starting from an initial hovering position
at $z=3.5~\mathrm{[m]}$, $x=y=0~\mathrm{[m]}$ using
an outer PD controller with state and input saturation, which is
parametrized as
\begin{align*}
	\pi_\LL(\tilde x;p,d) \mDef
	\begin{cases}
		\mathrm{clip}(p_{12} (x_d-x) + d_{12} \dot x,-1,1) \\
		\mathrm{clip}(p_{12} (y_d-y) + d_{12} \dot y,-1,1) \\
		\mathrm{clip}(p_{3} (z_d-z) + d_3 \dot z,-1,1)
	\end{cases}
\end{align*}
where $\mathrm{clip}(x,c_1,c_2)\mDef \max(\min(x,c_2),c_1)$ and
with PD-controller gains $p_{12},p_3\in [0, 10]$
and $d_{12}, d_3 \in [-10,0]$. Similar to the example
presented in Section~\ref{subsec:pendulum}, we use
Bayesian Optimization~\citep{Neumann-Brosig2019},
with
$\ell(\tilde x, u)\mDef |x_d-x| + |y_d-y| + |z_d-z| + 100 \mNormSmall{2}{\pi_\LL(\tilde x) - \mVpredOpt_{0|k}}$
that heavily penalizes safety ensuring actions by the safety filter.
Direct application of the learning algorithm yields
a significant number of ground contacts
(in addition to violations of the maximum vertical velocity) as
shown in Figure~\ref{fig:quadrotor} by the red trajectories over a total of
240 learning episodes, where ground contacts are defined as $<0.01~\mathrm{[m]}$
minimum distance to the ground and are highlighted by red dots.

\emph{Predictive safety filter from data:}
Similarly to the numerical example in Section~\ref{subsec:pendulum},
we construct a model via Bayesian Regression using a Gaussian prior on
the parameters $\theta$ together with Gaussian noise on observations.
The data required for infering the prediction model is generated
through experiments at a safe altitude using 100 random step
inputs that are applied for $60~\mathrm{[s]}$
to the inner control loop. Based on the inferred model,
the constraint tightening was experimentally chosen as
$\rho=0.999, \epsilon=0.01$ using posterior samples as
described in Section~\ref{app:design_certification} with a maximum
allowable error set $\mCS^\gamma$ defined as the $2$-norm ball
with radius $0.02$. The planning horizon is given by $N=20$
and the terminal set is selected as a sufficiently high altitude
of $z\geq 1.5~\mathrm{[m]}$, from which we can guarantee
constraint satisfaction for all future times using, e.g.,
suboptimal PD controller gains $p_{12} = p_3 = 0.5$ and
$d_{12} = d_3 = -0.5$.
The set-valued model confidence map is set up analogue to
Section~\ref{subsec:pendulum} with desired chance constraint
satisfaction $p_\mSafe = 0.9$.

\emph{Results:}
As shown in Figure~\ref{fig:quadrotor} the predictive safety
filter enables constraint satisfaction during all 240 learning-episodes
and results in a favorable optimal trajectory as shown in
Figure~\ref{fig:quadrotor}~(Top-left), for which
the safety filter is permanently inactive as specified via
the objective function.
\section{Conclusion}
This paper has addressed the problem of safe
RL by introducing a predictive
safety filter, which enables modularity in terms of
safety and the employed RL algorithm.
An optimization-based formulation was proposed
that provides rigorous safety guarantees using
a possibly data-driven approximate system model. 
By its capability to consider nonlinear and complex
system descriptions without being overly conservative,
we believe that the proposed approach is an important
step towards safe RL for realistic applications.

\bibliographystyle{dcu}
\bibliography{bibliography.bib}

\appendix
\section{Appendix}
\subsection{Lipschitz continuity w.r.t. Hausdorff metric}
\begin{definition}\label{def:hausdorff_metric}
	The \emph{Hausdorff metric} between two sets $\mathcal A$ and $\mathcal B$ in a metric
	space $(M,d_{\mathrm M})$ is defined as
	\begin{align*}
		\mHausdorff{\mathcal A}{\mathcal B} \mDef \max \left\{
		\adjustlimits
		\sup_{a\in\mathcal A} \inf_{b\in\mathcal B} d_{\mathrm M}(a,b),
		\adjustlimits
		\inf_{a\in\mathcal A} \sup_{b\in\mathcal B} d_{\mathrm M}(a,b)
		\right\}.
	\end{align*}
\end{definition}

\begin{definition}\label{def:Lipschitz_continuous_mapping}
	A set valued map $\mCS$ mapping vectors from 
	$\mathcal A \subseteq \RR^n$ to subsets of
	$\RR^m$ is called Lipschitz continuous with Lipschitz constant 
	$L_{\mCS}\geq 0$ under the Hausdorff metric with respect
	to the 2-Norm, if for all $a, b\in\mathcal A$ it holds that
	$
		\mHausdorff{\mathcal E(a)}{\mathcal E(b)}\leq L_\mCS \mNorm{2}{a-b}.
	$
\end{definition}

\subsection{Proof of Theorem \ref{thm:PSF}}\label{app:proof_of_theorem}

We begin by deriving a bound on the amount at which small changes in the planned
nominal trajectory $\{\mZpredOpt_{i|k}, \mVpredOpt_{i|k}\}$
affect the set membership constraint \eqref{eq:MPSF_opt_uncertainty_constraints}
in Lemma~\ref{lem:containment_following_confidence_set}.
Based on Assumption~\ref{ass:local_incremental_stabilizability} together
with Lipschitz continuity of the state, input, and terminal constraints,
as well as the aforementioned bound on the set membership constraint,
we then show that feasibility of~\eqref{eq:MPSF_opt} for planning horizon
$N$ at time $k$ together with $e(k,\bar\theta)\in\mSetValuedConfidence{p_\mSafe}(x(k),u(k))$
implies the existence of a feasible solution at time $N-1$
in Lemma~\ref{lem:feasible_solution}.
Finally, we iteratively apply Lemma~\ref{lem:feasible_solution},
to prove Theorem~\ref{thm:PSF}.

In the following, we consider a model error set
$\mCS\mDef\{e\in\RR^n | a_\mCS(e) \leq \mOnes{n_\mCS}\}$, a safe terminal
set $\mSafe^t\mDef\{ x \in \RR^n | a_\mSafe(x) \leq \mOnes{n_\mSafe}\}$
according to Assumption~\ref{ass:terminal_safe_set} 
with $\mDefFunction{a_{\mCS,i}}{\RR^n}{\RR}$ and $\mDefFunction{a_{\mSafe,i}}{\RR^n}{\RR}$,
both Lipschitz continuous functions with constants $L_{a_{\mCS}}$, $L_{a_\mSafe}$.
In the predictive safety filter optimization problem \eqref{eq:MPSF_opt}
the constraints are defined according to the tightening
\eqref{eq:constraint_tightening} and \eqref{eq:constraint_tightening_set_valued_map}.
We denote an optimal solution of \eqref{eq:MPSF_opt} at time $k$
with planning horizon $N$ as nominal input sequence
$\mVpredOpt_{0|k,N},..,\mVpredOpt_{N-1|k,N}$ with corresponding
nominal state sequence $\mZpredOpt_{0|k,N},..,\mZpredOpt_{N|k,N}$
\eqref{eq:MPSF_opt_dynamic_constraints}.

\begin{lemma}\label{lem:containment_following_confidence_set}
	Let Assumption~\ref{ass:set_valued_model_confidence_map} hold. If
	$x,x^+\in\XX$, $u,u^+\in\UU$ and
	$\mSetValuedConfidence{p_\mSafe}(x,u)\subseteq\mCS^\gamma$, then
	$\mSetValuedConfidence{p_\mSafe}(x^+, u^+) \subseteq \mCS^{\gamma,+}(\mNorm{2}{\Delta z})$,
	where
	$\Delta z\mDef \mNorm{2}{[x^\top,u^\top] - [x^{+\top}, u^{+\top}]}$
	and $\mCS^{\gamma,+}(\mNorm{2}{\Delta z})\mDef\{e\in\RR^n | a_\mCS(e) \leq \gamma\mOnes{n_\mCS} +
		L_{a_\mCS}L_{\mSetValuedConfidence{p_\mSafe}}\mNorm{2}{\Delta z} \mOnes{n_\mCS} \}$.
\end{lemma}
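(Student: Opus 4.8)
The plan is to transfer the containment $\mSetValuedConfidence{p_\mSafe}(x,u)\subseteq\bar\mCS$ from the base point $(x,u)$ to the perturbed point $(x^+,u^+)$ by chaining together two Lipschitz estimates: the Lipschitz continuity of the set-valued confidence map (Assumption~\ref{ass:set_valued_model_confidence_map}) and the Lipschitz continuity of the scalar functions $a_{\mCS,i}$ defining $\bar\mCS$. First I would invoke Assumption~\ref{ass:set_valued_model_confidence_map}: since $\mSetValuedConfidence{p_\mSafe}$ is Lipschitz with constant $L_{\mSetValuedConfidence{p_\mSafe}}$ under the Hausdorff metric with respect to $\mNorm{2}{\cdot}$ on $\XX\times\UU$, I obtain directly $\mHausdorff{\mSetValuedConfidence{p_\mSafe}(x^+,u^+)}{\mSetValuedConfidence{p_\mSafe}(x,u)} \leq L_{\mSetValuedConfidence{p_\mSafe}}\,\mNorm{2}{\Delta z}$.

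Next I would translate this Hausdorff bound into a point-to-set statement. By Definition~\ref{def:hausdorff_metric}, the maximum dominates its first term $\sup_{e^+}\inf_{e}\mNorm{2}{e^+-e}$ with the supremum taken over $\mSetValuedConfidence{p_\mSafe}(x^+,u^+)$ and the infimum over $\mSetValuedConfidence{p_\mSafe}(x,u)$, so for every $e^+ \in \mSetValuedConfidence{p_\mSafe}(x^+,u^+)$ there exists a point $e \in \mSetValuedConfidence{p_\mSafe}(x,u)$ with $\mNorm{2}{e^+ - e} \leq L_{\mSetValuedConfidence{p_\mSafe}}\,\mNorm{2}{\Delta z}$. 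The hypothesis $\mSetValuedConfidence{p_\mSafe}(x,u)\subseteq \bar\mCS$ then furnishes $a_\mCS(e) \leq \mOnes{n_\mCS}$ componentwise.

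Finally I would push this inequality onto $e^+$ via the Lipschitz continuity of each component $a_{\mCS,i}$ with constant $L_{a_\mCS}$: for each $i$,
\[
	a_{\mCS,i}(e^+) \leq a_{\mCS,i}(e) + L_{a_\mCS}\mNorm{2}{e^+ - e} \leq 1 + L_{a_\mCS}L_{\mSetValuedConfidence{p_\mSafe}}\mNorm{2}{\Delta z},
\]
which is exactly the defining inequality of $\bar\mCS^+(\mNorm{2}{\Delta z})$ once $L_\mCS$ is identified with the map's Lipschitz constant $L_{\mSetValuedConfidence{p_\mSafe}}$. As $e^+$ was arbitrary, this establishes $\mSetValuedConfidence{p_\mSafe}(x^+,u^+)\subseteq\bar\mCS^+(\mNorm{2}{\Delta z})$.

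The main obstacle I anticipate is the second step: correctly reading off the one-sided directed distance from the (slightly nonstandard) Hausdorff metric of Definition~\ref{def:hausdorff_metric}. One must check that the \emph{directed} bound needed here---every point of the perturbed set lies close to some point of the base set $\mSetValuedConfidence{p_\mSafe}(x,u)\subseteq\bar\mCS$, and not the reverse---is indeed the term bounded by the metric, and handle the infimum carefully (e.g.\ using compactness of $\bar\mCS$) so that the approximating point $e$ genuinely belongs to $\mSetValuedConfidence{p_\mSafe}(x,u)$ rather than only to its closure. Everything downstream is a routine componentwise Lipschitz estimate.
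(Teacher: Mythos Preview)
Your proposal is correct and mirrors the paper's own argument almost exactly: the paper also picks, for each $e^+\in\mSetValuedConfidence{p_\mSafe}(x^+,u^+)$, a nearest point $e^*\in\mSetValuedConfidence{p_\mSafe}(x,u)$, bounds $\mNorm{2}{e^+-e^*}$ by the directed Hausdorff term and hence by $L_{\mSetValuedConfidence{p_\mSafe}}\mNorm{2}{\Delta z}$, and then applies the Lipschitz estimate for $a_\mCS$. Your cautionary remarks about reading off the correct directed distance and about attainment of the infimum are well taken---the paper simply writes $e^*=\arginf_{e}\mNorm{2}{e^+-e}$ without comment---but the underlying strategy is identical.
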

\begin{proof}
	The essential observation is that all
	$e^+\in\mSetValuedConfidence{p_\mSafe}(x^+, u^+)$
	can be written as
	$e^+ = e^* + \Delta e$ with
	$e^*\mDef \arginf_{e\in\mSetValuedConfidence{p_\mSafe}(x,u)}\mNorm{2}{e^+ - e}$, $\Delta e \in \RR^n$
	for which we have by Assumption~\ref{ass:set_valued_model_confidence_map}
	that
	\begin{align*}
		\mNorm{2}{\Delta e} & = \mNorm{2}{e^+ - e^*}
		=  \inf_{e\in\mSetValuedConfidence{p_\mSafe}(x,u)}{\mNorm{2}{e^+ - e}}                        \\
		                    & \leq \adjustlimits\sup_{e^+\in\mSetValuedConfidence{p_\mSafe}(x^+,u^+)}
		\inf_{e\in\mSetValuedConfidence{p_\mSafe}(x,u)}{\mNorm{2}{e^+ - e}}                           \\
		                    & \leq L_{\mSetValuedConfidence{p_\mSafe}} \mNorm{2}{\Delta z}.
	\end{align*}
	This allows us to derive
	\begin{align}
		a_\mCS(e^+) & = a_\mCS(e^*) + a_\mCS(e^+) - a_\mCS(e^*) \nonumber                                                         \\
		            & \leq a_\mCS(e^*) + L_{a_\mCS}\mNorm{2}{\Delta e}\mOnes{n_\mCS}     \nonumber                                 \\
		            & \leq a_\mCS(e^*) + L_{a_\mCS}L_{\mSetValuedConfidence{p_\mSafe}}\mNorm{2}{\Delta z} \mOnes{n_\mCS} \nonumber\\
					& \leq \gamma \mOnes{n_\mCS} + L_{a_\mCS}L_{\mSetValuedConfidence{p_\mSafe}}\mNorm{2}{\Delta z} \mOnes{n_\mCS}
					\label{eq:containment_following_confidence_set_proof_1}
	\end{align}
	since by definition $e^*\in\mSetValuedConfidence{p_\mSafe}(x,u)\subseteq\mCS^\gamma$.
	Therefore, for all $e^+\in\mSetValuedConfidence{p_\mSafe}(x^+, u^+)$,
	\eqref{eq:containment_following_confidence_set_proof_1} holds,
	which implies $\mSetValuedConfidence{p_\mSafe}(x^+, u^+) \subseteq \mCS^{\gamma,+}(\mNorm{2}{\Delta z})$,
	completing the proof.
\end{proof}
\begin{lemma}\label{lem:feasible_solution}
	Let Assumptions~\ref{ass:local_incremental_stabilizability} and
	\ref{ass:set_valued_model_confidence_map} hold.
	For every $\epsilon > 0$, there exist corresponding
	$c>0,\gamma > 0$ such that if
	1) $L_{\mSetValuedConfidence{p_\mSafe}} \leq c\epsilon$,
	2) Problem \eqref{eq:MPSF_opt} is feasible at time $k$ with prediction horizon $N>0$,
	3) $u(k)=\pi_\mSafe(k,x(k),u_\LL(k))=\mVpredOpt_{0|k}$ is applied to
		\eqref{eq:general_nonlinear_system}, and
	4) $e(k,\theta_{\mathcal R})\in\mSetValuedConfidence{p_\mSafe}(x(k),u(k))$ with
			probability $1$,
	then the input sequence
	\begin{align*}
		\mUpredCand_{i|k+1}\mDef \pi(\mZpredCand_{i|k+1},\mZpredOpt_{i+1|k,N},\mVpredOpt_{i+1|k,N})
		~\text{ for } i\in\mIntInt{0}{N-2}
	\end{align*}
	with $\pi$ according to Assumption~\ref{ass:local_incremental_stabilizability},
	corresponding nominal state sequence
	$\mZpredCand_{i+1|k+1} = f(\mZpredCand_{i|k+1},\mUpredCand_{i|k+1},\bar \theta)$ 
	with $\mZpredCand_{0|k+1}=x(k+1)$ $\forall ~ i\in\mIntInt{0}{N-2}$ and
	$\mZpredCand_{-1|k+1}=x(k)$, is a feasible solution to \eqref{eq:MPSF_opt} at time $k+1$
	with prediction horizon $N-1$.
\end{lemma}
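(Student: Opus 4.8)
The plan is to prove recursive feasibility in the standard tube-MPC fashion: take the shifted tail of the optimal plan at time $k$ and track it with the incremental stabilizing policy $\pi$ from Assumption~\ref{ass:local_incremental_stabilizability}, then verify that the induced deviations are small enough to respect the tightened constraints of \eqref{eq:MPSF_opt} at horizon $N-1$. The central object is the tracking error between the candidate state $\mZpredCand_{i|k+1}$ and the reference $\mZpredOpt_{i+1|k,N}$. At $i=0$ this error is exactly the realized model error, since $\mZpredCand_{0|k+1}=x(k+1)=f(x(k),u(k);\theta_{\mathcal R})$ while $\mZpredOpt_{1|k,N}=f(x(k),u(k);\bar\theta)$, so that $\mZpredCand_{0|k+1}-\mZpredOpt_{1|k,N}=e(k,\bar\theta)$. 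First I would use feasibility at time $k$: the uncertainty constraint \eqref{eq:MPSF_opt_uncertainty_constraints} at $i=0$ reads $\mSetValuedConfidence{p_\mSafe}(x(k),u(k))\subseteq\bar\mCS_0=\bar\mCS$ (as $\epsilon_0=0$), which together with assumption 4) yields $e(k,\bar\theta)\in\bar\mCS$ and hence $\mNorm{2}{e(k,\bar\theta)}\leq\hat e$.

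Next I would propagate this error along the reference. Using $V(x,x)=0$ and continuity in the first argument, for $\hat e$ small enough the initial error lies in the region $V<\delta$ where Assumption~\ref{ass:local_incremental_stabilizability} applies, giving $V(\mZpredCand_{0|k+1},\mZpredOpt_{1|k,N})\leq c_u\mNorm{2}{e(k,\bar\theta)}^2\leq c_u\hat e^2$. By induction on $i$, invoking the contraction bound with reference dynamics $\mZpredOpt_{i+2|k,N}=f(\mZpredOpt_{i+1|k,N},\mVpredOpt_{i+1|k,N};\bar\theta)$ and the definition of $\mUpredCand_{i|k+1}$, I obtain $V(\mZpredCand_{i|k+1},\mZpredOpt_{i+1|k,N})\leq\rho^i c_u\hat e^2$, which stays below $\delta$ for all $i$. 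The quadratic lower bound $c_l\mNorm{2}{\cdot}^2\leq V$ then yields the state deviation bound $\mNorm{2}{\mZpredCand_{i|k+1}-\mZpredOpt_{i+1|k,N}}\leq\sqrt{\rho}^{\,i}\sqrt{c_u/c_l}\,\hat e$, and the second property of Assumption~\ref{ass:local_incremental_stabilizability} gives the input deviation bound $\mNorm{2}{\mUpredCand_{i|k+1}-\mVpredOpt_{i+1|k,N}}\leq\pi_{\mathrm{max}}\sqrt{\rho}^{\,i}\sqrt{c_u/c_l}\,\hat e$.

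Finally I would check each constraint of \eqref{eq:MPSF_opt} at horizon $N-1$. The key structural fact is that the slack between consecutive tightenings is $\epsilon_{i+1}-\epsilon_i=\sqrt{\rho}^{\,i}\epsilon$ by \eqref{eq:constraint_tightening_sequence}, which decays at exactly the same rate $\sqrt{\rho}^{\,i}$ as the deviation bounds. For the state constraint, feasibility at time $k$ gives $A_x\mZpredOpt_{i+1|k,N}\leq(1-\epsilon_{i+1})\mOnes{n_x}$, and bounding $A_x(\mZpredCand_{i|k+1}-\mZpredOpt_{i+1|k,N})$ row-wise shows $\mZpredCand_{i|k+1}\in\bar\XX_i$ provided $\max_j\mNorm{2}{\mRow{j}(A_x)}\sqrt{c_u/c_l}\,\hat e\leq\epsilon$; the $\sqrt{\rho}^{\,i}$ factors cancel. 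The input and (via the Lipschitz constant $L_\mSafe$ of $a_\mSafe$) terminal constraints follow identically, each producing a threshold of the form ``constant$\cdot\hat e\leq\epsilon$''. The uncertainty constraint is handled by Lemma~\ref{lem:containment_following_confidence_set} applied to the reference/candidate pair, which bounds how $\mSetValuedConfidence{p_\mSafe}$ moves under the combined state-input displacement $\mNorm{2}{\Delta z}\leq\sqrt{\rho}^{\,i}\sqrt{(c_u/c_l)(1+\pi_{\mathrm{max}}^2)}\,\hat e$, again cancelling $\sqrt{\rho}^{\,i}$ against the available margin. Choosing $\hat e$ as the minimum of these finitely many thresholds (together with $\hat e\leq\epsilon$ and $\hat e$ small enough to remain in $V<\delta$) completes the proof.

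The main obstacle, and the part that distinguishes this from textbook tube-MPC, is the uncertainty constraint \eqref{eq:MPSF_opt_uncertainty_constraints}: the confidence map lives on the joint $(x,u)$ space, so its perturbation must be controlled through the state and input deviations simultaneously and then translated into a constraint margin via the Hausdorff-Lipschitz property of Assumption~\ref{ass:set_valued_model_confidence_map}, which is precisely what Lemma~\ref{lem:containment_following_confidence_set} supplies. A secondary subtlety is the bootstrapping of the induction: applying Assumption~\ref{ass:local_incremental_stabilizability} at step $i$ requires $\mZpredCand_{i|k+1}\in\XX$ and $V<\delta$, which themselves follow only after the state-constraint check at step $i$, so the contraction estimate and the feasibility verification must be carried out jointly inside one induction rather than sequentially.
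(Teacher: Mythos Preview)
Your proposal is correct and follows essentially the same approach as the paper's proof: bound the initial tracking error by $\hat e$, propagate it via the contraction of Assumption~\ref{ass:local_incremental_stabilizability} to obtain $\sqrt{\rho}^{\,i}$-decaying state and input deviations, and then verify each tightened constraint of \eqref{eq:MPSF_opt} using the matching $\sqrt{\rho}^{\,i}\epsilon$ slack from \eqref{eq:constraint_tightening_sequence}, with Lemma~\ref{lem:containment_following_confidence_set} handling the uncertainty constraint. Your closing remark about bootstrapping the induction (needing $\mZpredCand_{i|k+1}\in\XX$ before invoking Assumption~\ref{ass:local_incremental_stabilizability} at step $i$) is a valid refinement that the paper's presentation elides by first deriving all contraction bounds and only afterwards checking state-constraint feasibility.
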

\begin{proof}
	The following proof is a modified and extended Version of \citet[Proposition 5]{Koehler2018b},
	which considers nonlinear systems with additive disturbances of the form
	$x(k+1) = f(x(k),u(k)) + w(k)$, to address model \eqref{eq:general_nonlinear_system},
	\eqref{eq:dynamics_model} in combination with the set-valued model confidence
	and terminal safe set constraints \eqref{eq:MPSF_opt_uncertainty_constraints}
	and \eqref{eq:MPSF_opt_safety_constraint}.
	The proof makes use of the conditions in Assumption~\ref{ass:local_incremental_stabilizability}
	to derive bounds on the difference between the optimal plan at time $k$, and the
	constructed plan $\{\mZpredCand_{i|k+1}\}$, $\{\mVpredCand_{i|k+1}\}$ at time $k+1$,
	which is in turn used to show that the constraint tightening implies that
	the constructed plan is a feasible solution for \eqref{eq:MPSF_opt} with planning
	horizon $N-1$. In order to streamline notation, note that 
	Assumption~\ref{ass:local_incremental_stabilizability} holds for all $v,v^+\in\UU$
	which allows us, together with the fact that $\mVpredOpt_{i|k}\in\UU$
	for all $i\in\mIntInt{0}{N-1}$ due to feasibility at time $k$, to omit the third argument
	of $V$ in the following analysis.
	
	We start by bounding errors $e \in \mCS^\gamma$ in terms of the
	scaling factor $\gamma$ as defined in~\eqref{eq:admissible_error}. Using
	the lower bound $\mNorm{2}{a_{\mCS}(e)}\geq c_\mCS \mNorm{2}{e}$ we obtain
	the relation $e\in\mCS^\gamma \Rightarrow a_\mCS(e)\leq \gamma \mOnes{n_\mCS}\Rightarrow
	\Vert a_\mCS(e) \Vert_\infty \leq \gamma$, and it follows that
	\begin{align}\label{eq:error_gamma_bound}
		\text{for all }e\in\mCS^\gamma \text{ it holds }\mNorm{2}{e} \leq c_\gamma\gamma
	\end{align}
	with $c_\gamma \mDef (\sqrt{n_\mCS}/c_\mCS)$.
	In the next step, we derive a bound $\gamma_1 > 0$ on
	$\gamma$ such that $V(x(k+1),\mZpredOpt_{1|k,N})\leq \delta$ holds. Select
	$\gamma \leq \gamma_1 \mDef c_\gamma^{-1}\sqrt{\frac{\delta}{c_u}}$
	and note that by assumption and constraint \eqref{eq:MPSF_opt_uncertainty_constraints}
	we obtain
	\begin{align}\label{eq:lemma_feasible_solution_proof_0}
		\mNorm{2}{x(k+1)-\mZpredOpt_{1|k,N}}^2 \leq c_\gamma^2 \gamma^2 \leq \frac{\delta}{c_u}
	\end{align}
	and combined with Assumption~\ref{ass:local_incremental_stabilizability} it follows
	\begin{align}\label{eq:lemma_feasible_solution_proof_1}
		V(x(k+1),\mZpredOpt_{1|k,N}) & \leq c_u \mNorm{2}{x(k+1)-\mZpredOpt_{1|k,N}}^2\leq \delta.
	\end{align}
	Next, we show when $\gamma>0$ is small enough, $\mVpredCand_{i|k+1}$ is
	a feasible candidate input sequence to \eqref{eq:MPSF_opt} with planning horizon $N-1$ in two steps.
	In \emph{Step 1} we show that $\mZpredCand_{i|k+1}\in\bar \XX_i$ holds for all
	$i\in\mIntInt{0}{N-2}$ by induction, which allows us in \emph{Step 2} to construct sufficient
	bounds on $\gamma$ that imply feasibility via the tightening sequence
	\eqref{eq:constraint_tightening_sequence} of the remaining constraints
	in~\eqref{eq:MPSF_opt}.\\
	\emph{Step 1:}
	For the induction start $i=0$ we show $\mZpredCand_{0|k+1,N}\in\XX_0=\XX$
	using the row sum norm $\mNorm{\infty}{A_x}$ and the fact that
	$\mNorm{2}{a}\mNorm{2}{b}\leq \mNorm{1}{a}\mNorm{2}{b}$ for all $a,b\in\RR^{n_x}$ to get
	\begin{align*}
		A_x \mZpredCand_{0|k+1} & = A_x \mZpredOpt_{1|k,N} +  A_x(\mZpredCand_{0|k+1}-\mZpredOpt_{1|k,N})\\
			& \leq (1-\epsilon)\mOnes{n_x} + \mNorm{\infty}{A_x}\mNorm{2}{\mZpredCand_{0|k+1}-\mZpredOpt_{1|k,N}}\mOnes{n_x}.
	\end{align*}
	Since $V(\mZpredCand_{0|k+1},\mZpredOpt_{1|k,N}) \geq c_l ||\mZpredCand_{0|k+1}-\mZpredOpt_{1|k,N}||_2^2$
	we have with \eqref{eq:lemma_feasible_solution_proof_0}, \eqref{eq:lemma_feasible_solution_proof_1} that
	$||\mZpredCand_{0|k}-\mZpredOpt_{1|k,N}||_2\leq\sqrt{\frac{c_u}{c_l}}c_\gamma\gamma$ and therefore
		$A_x \mZpredCand_{0|k+1} \leq (1-\epsilon)\mOnes{n_x} + \mNorm{\infty}{A_x}\sqrt{\frac{c_u}{c_l}}c_\gamma\gamma\mOnes{n_x}$.
	Selecting $\gamma \leq \gamma_2 \mDef c_\gamma^{-1}\sqrt{\frac{c_l}{c_u}}\frac{\epsilon}{\mNorm{\infty}{A_x}}$ yields
	\begin{align}
		A_x \mZpredCand_{0|k+1} & \leq (1-\epsilon)\mOnes{n_x} + \epsilon \mOnes{n_x} \Rightarrow \mZpredCand_{0|k+1}\in\XX.
		\label{eq:lemma_feasible_solution_proof_2}
	\end{align}
	In order to show the induction step
	$
			\forall j\in\mIntInt{0}{i}:~ \mZpredCand_{j|k+1}\in\bar\XX_{j}	\Rightarrow	\mZpredCand_{i+1|k+1} \in\bar\XX_{i+1},
	$
	for all $i\in\mIntInt{0}{N-3}$ we use Assumption~\ref{ass:local_incremental_stabilizability}
	and derive
	\begin{align*}
		c_uc_\gamma^2\gamma^2
		 & \geq V(\mZpredCand_{0|k+1},\mZpredOpt_{1|k,N})              
		  \geq \rho^{-1} V(\mZpredCand_{1|k+1},\mZpredOpt_{2|k,N}) \geq ..    \\
		 & \geq \rho^{-i} V(\mZpredCand_{i|k+1},\mZpredOpt_{i+1|k,N}) \\
		 & \geq \rho^{-1-i} V(\mZpredCand_{i+1|k+1},\mZpredOpt_{i+2|k,N})
	\end{align*}
	and consequently
	\begin{align*}
		V(\mZpredCand_{i+1|k+1},\mZpredOpt_{i+2|k,N})\leq \rho^{i+1} c_u c_\gamma^2\gamma^2\leq \delta
		\text{ for all } i\in\mIntInt{0}{N-3}
	\end{align*}
	since $\rho \in (0,1)$. By Assumption \eqref{ass:local_incremental_stabilizability} we have
	$c_l ||\mZpredCand_{i+1|k+1}-\mZpredOpt_{i+2|k,N}||_2^2 \leq V(\mZpredCand_{i+1|k+1},\mZpredOpt_{i+2|k,N})$ and
	\begin{align}\label{eq:deviation_candidate_optimal}
		\mNorm{2}{\mZpredCand_{i+1|k+1}-\mZpredOpt_{i+2|k,N}}^2 & \leq \rho^{i+1} \frac{c_u}{c_l}c_\gamma^2\gamma^2.
	\end{align}
	Similar to~\eqref{eq:lemma_feasible_solution_proof_2} we can conclude with $\gamma\leq \gamma_2$
	that
	\begin{align*}
		A_x \mZpredCand_{i+1|k+1} & \leq A_x\mZpredOpt_{i+2|k} + \mNorm{\infty}{A_x} \sqrt{\rho^{i+1}\frac{c_u}{c_l}}c_\gamma\gamma \mOnes{n_x}            \\
								& \leq (1 - \epsilon_{i+2})\mOnes{n_x} + \sqrt{\rho^{i+1}}\epsilon\mOnes{n_x} \\
								& \leq (1 - \epsilon_{i+1})\mOnes{n_x} \Rightarrow \mZpredCand_{i+1|k+1}\in\bar\XX_{i+1}
	\end{align*}
	for all $i\in\mIntInt{0}{N-3}$, which proves constraint satisfaction of the candidate
	state sequence $\mZpredCand_{i|k+1}$ with respect to state constraints by induction.\\
	\emph{Step 2:} Regarding the terminal constraint~\eqref{eq:MPSF_opt_safety_constraint}, note that
	due to $\mZpredCand_{N-2}\in\bar\XX_{N-2}\subset\XX$ we can use
	Assumption~\ref{ass:local_incremental_stabilizability} similarly to before in order to obtain
	\begin{align*}
		\mNorm{2}{\mZpredCand_{N-1|k+1}-\mZpredOpt_{N|k,N}}^2 & \leq \rho^{N-1} \frac{c_u}{c_l}c_\gamma^2\gamma^2.
	\end{align*}
	Let $\gamma \leq \gamma_3 \mDef c_\gamma^{-1}\sqrt{\frac{c_l}{c_u}}\frac{\epsilon}{L_\mSafe}$,
	implying similarly
	\begin{align*}
		a_\mSafe (\mZpredCand_{N-1|k+1})
		 & \leq a_\mSafe(\mZpredOpt_{N|k}) + L_\mSafe \sqrt{\rho^{N-1}\frac{c_u}{c_l}}c_\gamma\gamma \mOnes{n_s} \\
		 &\leq (1 - \epsilon_{N-1})\mOnes{n_\mSafe},
	\end{align*}
	showing terminal constraint satisfaction of \eqref{eq:MPSF_opt_safety_constraint}.
	Next we consider input constraints. Let
	$\gamma \leq \gamma_4 \mDef c_\gamma^{-1}\sqrt{\frac{c_l}{c_u}}\frac{\epsilon}
		{\mNorm{\infty}{A_u} \pi_{\mathrm{max}}}$,
	yielding together with Assumption~\ref{ass:local_incremental_stabilizability}
	and~\eqref{eq:deviation_candidate_optimal}
	\begin{align*}
		\mNorm{2}{\mUpredCand_{i|k+1}-\mVpredOpt_{i+1|k,N}}^2
		                                                      & \leq \pi_{\mathrm{max}}^2\mNorm{2}{\mZpredCand_{i|k+1}-\mZpredOpt_{i+1|k,N}}^2 \\
															  & \leq \pi_{\mathrm{max}}^2\rho^{i} \frac{c_u}{c_l}c_\gamma^2\gamma^2,
	\end{align*}
	providing that
	\begin{align*}
		A_u \mUpredCand_{i|k+1}
		 & \leq A_u\mVpredOpt_{i+1|k} + \mNorm{\infty}{A_u} \pi_{\mathrm{max}}\sqrt{\rho^i\frac{c_u}{c_l}}c_\gamma\gamma\mOnes{n_u} \\
		 & \leq (1-\epsilon_i)\mOnes{n_u} \text{ for all } i\in\mIntInt{0}{N-2},
	\end{align*}
	showing input 
	constraint satisfaction \eqref{eq:MPSF_opt_input_constraints} of the candidate input sequence. For the
	uncertainty constraint \eqref{eq:MPSF_opt_uncertainty_constraints} we have by
	Lemma~\ref{lem:containment_following_confidence_set} that
	$
		 \mSetValuedConfidence{p_\mSafe}(\mZpredCand_{i|k+1}, \mUpredCand_{i|k+1})\subseteq
		 	\bar{\mCS}_{i+1}^{\gamma,+} (\mNorm{2}{\Delta z_i})
	$
	holds with $\bar{\mCS}_{i+1}^{\gamma,+}(\mNorm{2}{\Delta z_i})\mDef\{ e\in\RR^n | a_\mCS(e) \leq \gamma(1-\epsilon_{i+1})\mOnes{n_\mCS}
		+ L_{a_\mCS}L_{\mSetValuedConfidence{p_\mSafe}} \mNorm{2}{\Delta z_i}\mOnes{n_\mCS}\}$
	and
	\begin{align*}
		\Delta z_i &\mDef [{\mZpredCand_{i|k+1}}^\top, {\mUpredCand_{i|k+1}}^\top]^\top -[{\mZpredOpt_{i+1|k}}^\top, {\mVpredOpt_{i+1|k}}^\top]^\top,\\
		\mNorm{2}{\Delta  z_i}^2
		 & \leq \mNorm{2}{\mZpredCand_{i|k+1} - \mZpredOpt_{i+1|k}}^2 + \mNorm{2}{\mUpredCand_{i|k+1} - \mVpredOpt_{i+1|k}}^2 \\
		 & \leq (1 + \pi_{\mathrm{max}})^2 \rho^i\frac{c_u}{c_l}c_\gamma^2\gamma^2
	\end{align*}
	for all $i\in\mIntInt{0}{N-2}$. For a sufficiently small Lipschitz constant
	in Assumption~\ref{ass:set_valued_model_confidence_map}, such that $L_{\mSetValuedConfidence{p_\mSafe}} \leq c \epsilon$ with
	\begin{align}\label{eq:set_valued_confidence_map_lipschitz_bound}
		c \mDef \frac{1}{L_{a_\mCS}(1+\pi_{\mathrm{max}})}\sqrt{\frac{c_l}{c_u}}c_\gamma^{-1}
	\end{align}
	holds, we obtain
	\begin{align*}
		 & \mSetValuedConfidence{p_\mSafe}(\mZpredCand_{i|k+1}, \mUpredCand_{i|k+1})\subseteq\mCS_{i+1}^{\gamma,+}(\mNorm{2}{\Delta z}) \\
		 & ~\subseteq \{ e\in\RR^n | a_\mCS(e) \leq \gamma(1-\epsilon_{i+1})\mOnes{n_\mCS}                                                 \\
		 & \qquad\qquad~~~~+ L_{a_\mCS}L_{\mSetValuedConfidence{p_\mSafe}} (1 + \pi_{\mathrm{max}}) \sqrt{\rho^i\frac{c_u}{c_l}}c_\gamma\gamma \mOnes{n_\mCS}\}         \\
		 & ~\subseteq \{ e\in\RR^n | a_\mCS(e) \leq \gamma(1-\epsilon_{i+1})\mOnes{n_\mCS} + \sqrt{\rho^i}\epsilon \gamma\mOnes{n_\mCS}\}        \\
		 & ~\subseteq \{ e\in\RR^n | a_\mCS(e) \leq \gamma(1-\epsilon_{i}) \mOnes{n_\mCS}\}     \subseteq \bar \mCS^\gamma_i,
	\end{align*}
	which shows feasibility of \eqref{eq:MPSF_opt_uncertainty_constraints}.
	Therefore, if $L_{\mSetValuedConfidence{p_\mSafe}}$ satisfies
	$L_{\mSetValuedConfidence{p_\mSafe}} \leq c \epsilon$ for a selected
	$\epsilon$ with $c$ as defined in \eqref{eq:set_valued_confidence_map_lipschitz_bound},
	then there always exists a
	\begin{align}\label{eq:app_maximum_error_magnitude}
		\gamma \mDef \min\{\gamma_1,\gamma_2,\gamma_3,\gamma_4\}>0,
	\end{align}
	such that $\{\mUpredCand_{i|k+1}\}$ is a feasible solution to
	\eqref{eq:MPSF_opt} at time $k+1$ with planning horizon $N-1$,
	proving the desired statement.
\end{proof}
Note that the bounds $\gamma_i$ for $i=1,2,3,4$
unveil an inner relationship between the constraint tightening
fraction $\epsilon$ and the maximal allowable error set
that is defined through the scaling parameter $\gamma$.
Since the bound on $\gamma$ takes the form $\gamma \leq \min\{c_\gamma^{-1}\sqrt{\delta c_u^{-1}}, \min_{i\in\{2,3,4\}} c_i\epsilon\}$
it follows that a bound exists on the error magnitude
that can be tolerated depending on the size of
the region, for which local incremental stabilizability
according to Assumption~\ref{ass:local_incremental_stabilizability}
holds. Therefore, increasing the constraint tightening beyond
$\epsilon = c_\gamma^{-1}\sqrt{\delta c_u^{-1}}/\min_i c_i$ will not necessarily allow an
increase in the tolerable uncertainty magnitude.
We can now utilize Lemma~\ref{lem:feasible_solution} to
prove Theorem~\ref{thm:PSF}, i.e. that application of Algorithm 2
implies safe system operation according to~\eqref{eq:chance_constraints}.
\begin{proof}[Proof of Theorem~\ref{thm:PSF}]
	Select $\gamma$ according to Lemma~\ref{lem:feasible_solution}.
	By considering the different cases in Algorithm~2, we show safety
	$\forall k$ according to \eqref{eq:chance_constraints} by utilizing
	the relation
	\begin{align}
		& \Pr(\forall k: x(k)\in\XX,u(k)\in\UU)      \nonumber                          \\
   \geq & \Pr(\forall k:x(k)\in\XX,u(k)\in\UU,	
				   e(k,\theta_{\mathcal R})\in\mSetValuedConfidence{p_\mSafe}(k))  \nonumber			\\
   =    & \Pr(\forall k:x(k)\in\XX,u(k)\in\UU|	
				   \forall k: e(k,\theta_{\mathcal R})\in\mSetValuedConfidence{p_\mSafe}(k))     \nonumber\\
		   & \cdot \Pr(\forall k: e(k,\theta_{\mathcal R})\in\mSetValuedConfidence{p_\mSafe}(k)),\label{eq:thm_proof_2}
\end{align}
	where we use $\mSetValuedConfidence{p_\mSafe}(k)$
	instead of $\mSetValuedConfidence{p_\mSafe}(x(k),u(k))$.
	Since $\Pr(\forall k: e(k,\theta_{\mathcal R})\in\mSetValuedConfidence{p_\mSafe}(k))\geq p_\mSafe$
	by Assumption~\ref{ass:set_valued_model_confidence_map}, relation
	\eqref{eq:thm_proof_2} allows us to prove (3) by establishing
	\begin{align}\label{eq:thm:proof_3}
		\Pr(\forall k:x(k)\in\XX,u(k)\in\UU|	
			\forall k: e(k,\theta_{\mathcal R})\in\mSetValuedConfidence{p_\mSafe}(k))=1.
	\end{align}
	The proof therefore reduces to the deterministic case showing that
	$x(k)\in \XX$, $u(k) \in \UU$, given $e(k,\theta_{\mathcal R})\in\mSetValuedConfidence{p_\mSafe}(k))$,
	at any time step $k$, which implies directly \eqref{eq:thm:proof_3}
	for all times $k$ and therefore via \eqref{eq:thm_proof_2} chance constraint
	satisfaction according to \eqref{eq:chance_constraints}, i.e.,
	safe system operation with respect to~\eqref{eq:chance_constraints}.
	In order to show $x(k)\in\XX$ and $u(k)\in\UU$, note that if \eqref{eq:MPSF_opt}
	is feasible at time $k$ for any planning horizon $\tilde N>0$ it follows
	due to the state and input constraints \eqref{eq:MPSF_opt_state_constraints},
	\eqref{eq:MPSF_opt_input_constraints} that $x(k)\in\XX$ and $u(k)\in\UU$.
	This implies directly that $x(k)\in\XX$ and $u(k)\in\UU$ for any time step
	$k$ for which \eqref{eq:MPSF_opt} is feasible for horizon $N$, as well as for
	all $\tilde k \in \mIntInt{k+1}{N + k - 1}$, for which \eqref{eq:MPSF_opt} is
	infeasible for horizon $N$, since feasibility of \eqref{eq:MPSF_opt} with
	horizon $N-(\tilde k - k)$ is obtained from iteratively
	applying Lemma~\ref{lem:feasible_solution} due to the condition that
	$\forall k: e(k,\theta_{\mathcal R})\in\mSetValuedConfidence{p_\mSafe}(k)$.
	For all $\tilde k\geq N+ k$, it follows from containment in the
	terminal safe set via \eqref{eq:MPSF_opt_safety_constraint} and
	 Assumption~\ref{ass:terminal_safe_set} that
	 $x(\tilde k)\in\XX$ and $u(\tilde k)=\pi_\mSafe^t(x(k))\in\UU$.
	This shows \eqref{eq:thm:proof_3} and therefore via \eqref{eq:thm_proof_2} that
	$\Pr(\forall k: x(k)\in\XX,u(k)\in\UU)\geq p_\mSafe$, completing the proof.
\end{proof}
\subsection{Offline design verification}\label{app:design_certification}
	The intuitive interpretation of the few parameters that
	need to be chosen allows to propose a selection and to
	certify it using sampling. Related statistical verification
	methods have been presented, e.g., in~\cite{Hertneck2018,karg2019probabilistic}.
	More precisely, to verify the underlying assumptions of Theorem~\ref{thm:PSF}
	and to ensure that the admissible error, i.e. $\gamma$
	in~\eqref{eq:admissible_error}, is selected sufficiently small, we
	make use of a statistical offline verification procedure
	for finite horizon tasks, i.e. $\bar N < \infty$ in~\eqref{eq:general_objective},
	for parameter design given a particular realization of a PSF $\pi_\mSafe$
	together with a learning policy $\pi_\LL$.
	We sample and simulate a sufficiently large
	number $N_\mathrm{total}\in\mathbb N^+$ of system parameter realizations
	as well as initial conditions according to their distributions
	$p(\theta)$ and $p(x_{\mathrm{init}})$ and provide
	a statistical bound on the number of successful simulations
	that ensure safety according to~\eqref{eq:chance_constraints}.
	Let $ \theta^{(j)}\sim p(\theta)$ and
	$ x_{\mathrm{init}}^{(j)}\sim p(x_\mathrm{init})$ for $j=1,2,..,N_\mathrm{total}$
	be i.i.d. samples and define an indicator function
	for safe execution as
	\begin{align*}
		\mIndi( \theta^{(j)},  x_{\mathrm{init}}^{(j)})
		=
		\begin{cases}
			1,\forall k=1,..,\bar N: & x_{\mathrm{init}}^{(j)}(k)\in\XX~\land \\
				& \pi_\LL(k,x_{\mathrm{init}}^{(j)}(k))\in\UU\\
			0,&\text{otherwise}
		\end{cases}
	\end{align*}
	with sampled system dynamics of the form
	$x_{\mathrm{init}}^{(j)}(k+1)=f(x_{\mathrm{init}}^{(j)}(k),
		\pi_\LL(k,x_{\mathrm{init}}^{(j)}(k));
		\theta^{(j)})$
	and initial condition $x_{\mathrm{init}}^{(j)}(0)=x_\mathrm{init}^{(j)}$.
	The probability of satisfying constraints $\XX$ and $\UU$ can therefore be expressed as
	$p_\mSafe^\infty \mDef \lim_{N_{\mathrm{total}}\rightarrow\infty}\bar p_\mSafe(N_\mathrm{total})$
	with
	\begin{align*}
		\bar p_\mSafe(N_\mathrm{total}) \mDef N_{\mathrm{total}}^{-1} \sum_{j=0}^{N_{\mathrm{total}}} \mIndi(\theta^{(j)}, x_{\mathrm{init}}^{(j)})
	\end{align*}
	via the law of large numbers. To determine a sufficiently large but finite
	number $N_{\mathrm{total}}$ to ensure $p_\mSafe^\infty \geq p_\mSafe$ we
	use Hoeffding's inequality \citep[Section 4.1]{Luxburg2011} as similarly proposed
	in~\citet{Hertneck2018}:
	$
		\Pr(|p_\mSafe^\infty - \bar p_\mSafe(N_\mathrm{total})| \geq \Delta_H^2)
		\leq 2 \exp (-2 N_{\mathrm{total}}\Delta_H)
	$
	for an error margin $\Delta_H>0$.
	By defining $\delta_H\mDef 2 \exp (-2 N_{\mathrm{total}}\Delta_H)$ it follows
	that $p_\mSafe^\infty \geq p_\mSafe(N_\mathrm{total}) + \Delta_H$ holds
	at confidence level $1-\delta_H$.
	This allows us to state a formal lower bound on the total
	number of offline simulations as follows:
	\begin{proposition}
		Consider a specific PSF problem parametrization,
		learning policy $\pi_\LL$, and
		confidence level $1-\delta_H$. If 
		$\bar p_\mSafe(N_\mathrm{total}) > p_\mSafe$ holds and
		\begin{align}\label{eq:hoeffdings_design_number}
			N_\mathrm{total} \geq  - \frac{\log (0.5\delta_H)}{2(\bar p_\mSafe(N_\mathrm{total})-p_\mSafe)^2},
		\end{align}
		then the chance constraints~\eqref{eq:chance_constraints} are fulfilled under
		application of $u(k)=\pi_\mSafe(k, x(k), \pi_\LL(k,x(k)))$
		to the real system~\eqref{eq:general_nonlinear_system}.
	\end{proposition}
	For example, if we have 
	$p_\mSafe = 0.95$ and $\bar p_\mSafe(N_\mathrm{total}) = 0.99$
	 for a specific choice of design parameters, then
	$N_{\mathrm{total}}$ needs to be greater than $1656$ to provide
	a sound PSF parametrization with $99 \%$ confidence.
	Note that~\eqref{eq:hoeffdings_design_number}
	is independent of the complexity of the
	system and that offline simulations can be executed in parallel.
	Naturally, small required values of $|p_\mSafe  - \bar p_\mSafe(N_\mathrm{total})|$ can lead
	to a rapid increase of $N_{\mathrm{total}}$. In case of infeasibility
	one needs to re-adjust the design parameters as described in Section~\ref{subsec:design_parameters}
	or more data needs to be collected.

\subsection{Sufficient condition for Assumption~\ref{ass:local_incremental_stabilizability}}
	Different from \citet{Koehler2018a,Koehler2018b} we require the first condition in
	Assumption~\ref{ass:local_incremental_stabilizability}
	to also hold for the case that $V(x,z,v)>\delta$, which is used in the proof of Lemma~\ref{lem:feasible_solution}.
	Nevertheless, \citet[Prop. 1]{Koehler2018a} still implies that the following verifiable
	assumption is sufficient for Assumption~\ref{ass:local_incremental_stabilizability}
	and relates it to local stabilizability of system \eqref{eq:general_nonlinear_system}.
	\begin{assumption}\label{ass:sufficient_condition_locally_incremental_stabilizable}
		Let $r \mDef (\mu,v) \in \XX \times \UU$
		and define the linearization $A_r \mDef \tfrac{\partial f}{\partial x}(r,\bar \theta)$,
		$B_r \mDef \tfrac{\partial f}{\partial u}(r,\bar \theta)$.
		For any $r\in\XX\times\UU$, the pair $(A_r,B_r)$
		is stabilizable, i.e. there exist
		$K_r\in\RR^{m\times n}$, $P_r,Q_r\in\RR^{n \times n}$ positive definite
		and continuous in $r$, such that
		\begin{align}\label{eq:ricatti_decrease}
			P_r - (A_r + B_r K_r)^\top P_r (A_r + B_r K_r) = Q_r.
		\end{align}
		Furthermore, there exists a constant
		$c\in\RR_{>0}$, such that for any $r^+=(\mu^+,v^+)\in\XX\times\UU$ with $\mu^+ = f(\mu,v,\bar \theta)$, the
		corresponding matrix $P_{r^+}$ satisfies:
		\begin{align}\label{eq:bound_V_rate}
			\lambda_{\mathrm{max}}(P_r^{-1}P_{r^+})Q_r \geq (\lambda_{\mathrm{max}}(P_r^{-1}P_{r^+})-1)P_r + cI_n.
		\end{align}
	\end{assumption}
	Given Assumption~\ref{ass:sufficient_condition_locally_incremental_stabilizable},
	we can choose $V(x,\mu,v)=(x-\mu)^\top P_r (x-\mu)$ in
	Assumption~\ref{ass:local_incremental_stabilizability}, with \eqref{eq:bound_V_rate}
	bounding the rate at which $V(x,\mu,v)$ can possibly change in any time step
	when applying $u=\pi(x,\mu,v)=v+K_r(x-\mu)$.
	Note that, according to \citet[Prop. 1]{Koehler2018a},
	Assumption~\ref{ass:sufficient_condition_locally_incremental_stabilizable}
	is only sufficient for Assumption \ref{ass:local_incremental_stabilizability},
	and therefore stabilizability of the linearization along any reference
	might not be needed in practice. However, it allows us to
	analytically express conditions that can be verified in principled steps.
	While \eqref{eq:ricatti_decrease} is always satisfied as long as the
	system is locally stabilizable, the second equation accounts for linearization errors
	as the system evolves and can therefore be seen as an intrinsic consequence
	of a local linearization-based analysis.
	As a practical consequence, the condition might be too conservative for
	nonlinear systems with quickly changing linearizations along
	references, i.e. large $||(A_{r+1},B_{r+1})-(A_r,B_r)||$. See, e.g., \citet{Koehler2018a,Koehler2018b}
	for concrete examples that demonstrate how 
	Assumption~\ref{ass:sufficient_condition_locally_incremental_stabilizable}
	can be verified for different nonlinear systems.
\end{document}